\newtheorem{lema}{Lemma}
\newtheorem{teorema}[lema]{Theorem}
\newtheorem{corolario}[lema]{Corollary}
\begin{document}

\title{The Predecessor-Existence Problem for $k$-Reversible Processes}

\author{Leonardo~I.~L.~Oliveira\\
Valmir~C.~Barbosa\\
\\
Programa de Engenharia de Sistemas e Computa\c c\~ao, COPPE\\
Universidade Federal do Rio de Janeiro\\
Caixa Postal 68511, 21941-972 Rio de Janeiro - RJ, Brazil\\
\\
F\'abio~Protti\thanks{Corresponding author (fabio@ic.uff.br).}\\
\\
Instituto de Computa\c c\~ao\\
Universidade Federal Fluminense\\
Rua Passo da P\'atria, 156, 24210-240 Niter\'oi - RJ, Brazil
}

\date{}

\maketitle

\begin{abstract}
For $k\geq 1$, we consider the graph dynamical system known as a $k$-reversible
process. In such process, each vertex in the graph has one of two possible
states at each discrete time. Each vertex changes its state between the present
time and the next if and only if it currently has at least $k$ neighbors in a
state different than its own. Given a $k$-reversible process and a configuration
of states assigned to the vertices, the \textsc{Predecessor Existence} problem
consists of determining whether this configuration can be generated by the
process from another configuration within exactly one time step. We can also
extend the problem by asking for the number of configurations from which a given
configuration is reachable within one time step. \textsc{Predecessor Existence}
can be solved in polynomial time for $k=1$, but for $k>1$ we show that it is
NP-complete. When the graph in question is a tree we show how to solve it in
$O(n)$ time and how to count the number of predecessor configurations in
$O(n^2)$ time. We also solve \textsc{Predecessor Existence} efficiently for the
specific case of $2$-reversible processes when the maximum degree of a vertex in
the graph is no greater than $3$. For this case we present an algorithm that
runs in $O(n)$ time.

\bigskip
\noindent
\textbf{Keywords:} $k$-reversible processes, Garden-of-Eden configurations,
Pre\-decessor-existence problem, Graph dynamical systems.
\end{abstract}

\newpage
\section{Introduction}\label{intr}

Let $G$ be a simple, undirected, finite graph with $n$ vertices and $m$ edges.
The set of vertices of $G$ is denoted by $V(G) = \{v_1, v_2, \ldots, v_n\}$ and
its set of edges is denoted by $E(G)$. A \textit{$k$-reversible} process on $G$
is an iterative process in which, at each discrete time $t$, each vertex in $G$ 
has one of two possible states. A state of a vertex is represented by an integer
belonging to the set $Q = \{-1,+1\}$ and each vertex has its state changed from
one time to the next if and only if it currently has at least $k$ neighbors in a
state different than its own, where $k$ is a positive integer.

Let $Y_t(v_i)$ be the state of vertex $v_i$ at time $t$. A
\textit{configuration} of states at time $t$ for the vertices in $V(G)$ is
denoted by $Y_t = (Y_t(v_1), Y_t(v_2), \ldots, Y_t(v_n))$. The one-step dynamics
in a $k$-reversible process for graph $G$ can be described by a function
$F_{G}^{k} : Q^n \to Q^n$ such that $Y_{t} = F_{G}^{k}(Y_{t-1})$ through a local
state update rule for each vertex $v_i$ given by
\begin{align}
Y_{t}(v_i) =
\left\{
\begin{array}{rl}
Y_{t-1}(v_i), &\mbox{if $v_i$ has fewer than $k$ neighbors in state $-Y_{t-1}(v_i)$}   \\
&\mbox{at time $t-1$;}   \\
-Y_{t-1}(v_i), &\mbox{otherwise.}\\
\end{array}
\right.
\end{align}

The motivation to study $k$-reversible processes is related to the analysis of
opinion dissemination in social networks. For example, suppose that a network is
modeled by a graph, each vertex representing a person and each edge between two
vertices indicating that the corresponding persons are friends. Suppose further
that state $-1$ represents disagreement on some issue and that the state $+1$
means agreement on the same issue. A $k$-reversible process is an approach to
model opinion dissemination when people are strongly influenced by the opinions
of their friends and the society they are part of. Notice that in this model we
are assuming that all people act in the same manner. A more complex approach
could assume, for example, distinct thresholds for each person or thresholds 
based on one's number of friends.

Note that $k$-reversible processes are examples of \textit{graph dynamical
systems}; more precisely, of \textit{synchronous dynamical systems}, which
extend the notion of a \textit{cellular automaton} to arbitrary graph
topologies. The study of graph dynamical systems and cellular automata is
multidisciplinary and related to several areas, like optics
\cite{phase-wrapping}, neural networks \cite{hopfield}, statistical mechanics
\cite{bootstrap-example}, as well as opinion \cite{opinion} and disease
dissemination \cite{disease-example}. Also in distributed computing there are
several studies regarding models of graph dynamical systems. An example is the
model of majority processes in which each vertex changes its state if and only
if at least half of its neighbors have a different state \cite{luccio}. An
application of this model is used for maintaining data consistency \cite{peleg}.

Most of the studies regarding $k$-reversible processes are related to the
\textsc{Minimum Conversion Set} problem in such processes. This problem
consists of determining the cardinality of the minimum set of vertices that, if
in state $+1$, lead all vertices in the graph also to state $+1$ after a finite
number of time steps. It has been proved that this problem is NP-hard for
$k > 1$ \cite{conversion_set}. There are also several interesting results
about this problem in the work by Dreyer \cite{processosReversiveis}, who also
presents some important results regarding the periodic behavior of
$k$-reversible processes, as well as upper bounds on the transient length that
precedes periodicity. Most of the results presented by Dreyer are based on
reductions from the so-called \textit{threshold processes}, which are broadly
studied by Goles and Olivos \cite{golesOlivos, golesOlivos2}. Another approach
to study the transient and periodic behavior of $k$-reversible processes is the
use of a specific energy function that leads to a much more intuitive proof of
the maximum period length and also better bounds on the transient length
\cite{dissertacao}.

A problem that arises in synchronous dynamical systems on graphs is the
so-called \textsc{Predecessor Existence} problem, defined as follows. Given one
such dynamical system and a configuration of states, the question is whether
this configuration can be generated from another configuration in a single time
step using the system's update rule. In the affirmative case, such configuration
is called a \textit{predecessor} of the one that was given initially. This
problem was studied by Sutner \cite{sutner} within the context of cellular
automata, where configurations lacking a predecessor configuration are known as
garden-of-Eden configurations, and was proved to be NP-complete for finite
cellular automata. NP-completeness results for related dynamical systems as well
as polynomial-time algorithms for some graph classes can also be found in the
literature \cite{pre}. An extension of the \textsc{Predecessor Existence}
problem is to count the number of predecessor configurations. This is also a
hard problem and has been proved to be $\#$P-complete \cite{preCount}. 

For $k$-reversible processes, we address these two problems in this paper. We
are interested in determining whether a configuration $Y_{t-1}$ exists for 
which $ Y_t = F_{G}^{k}(Y_{t-1})$. Because only time steps $t-1$ and $t$ matter,
for simplicity we denote $Y_{t-1}$ and $Y_{t}$ by $Y'$ and $Y$, respectively. We
henceforth denote this special case of \textsc{Predecessor Existence} by
\textsc{Pre($k$)}. We also consider the associated counting problem,
\textsc{\#Pre($k$)}, which asks for the number of predecessor configurations.
Our results include an NP-completeness proof for the general case of
$k$-reversible processes and polynomial-time algorithms for some particular
cases.

The remainder of the paper is organized as follows. In Section~\ref{pre1} we
show that \textsc{Pre($1$)} is polynomial-time solvable. In Section~\ref{np} we
provide an NP-completeness proof of \textsc{Pre($k$)} for $k > 1$. In
Section~\ref{trees} we describe two efficient algorithms for trees, one for
solving \textsc{Pre($k$)} and the other for solving \textsc{\#Pre($k$)}. In
Section~\ref{bounded} we show an efficient algorithm to solve \textsc{Pre($2$)}
for graphs with maximum degree no greater than $3$. Section~\ref{concl}
contains our conclusions.

\section{\boldmath Polynomial-time solvability of \textsc{Pre($1$)}}
\label{pre1}

For $k = 1$, if $Y'$ exists then any pair of neighbors $u$ and $v$ for which
$Y(u) = Y(v)$ also has $Y'(u) = Y'(v)$. Based on this observation, we start by
partitioning $G$ into connected subgraphs that are maximal with respect to the
property that each of them contains only vertices whose states in $Y$ are the
same. Clearly, all vertices in the same subgraph must have equal states also in
a predecessor of $Y$. Let us call each of such maximal connected subgraphs an
MCS.

Let $H$ be an MCS in which a vertex $v$ exists whose neighbors in $G$ all have
the same state as its own. In other words, all of $v$'s neighbors are also in
$H$. For this vertex, clearly there is no possibility other than $Y'(v) = Y(v)$.
Because there is only one choice of state for $v$ in a predecessor
configuration, we call both the vertex and its containing MCS \textit{locked}.
We refer to all other vertices and MCSs as being \textit{unlocked} (so there may
exist unlocked vertices in a locked MCS). We also say that any two MCSs are
neighbors whenever they contain vertices that are themselves neighbors.

\begin{teorema}
\label{k=1}
\textsc{Pre($1$)} is solved affirmatively if and only if the following two
conditions hold:
\begin{itemize}
\item No two locked MCSs are neighbors;
\item Every vertex in an unlocked MCS has at least one neighbor in another
unlocked MCS.
\end{itemize}
In this case, $Y'$ is obtained from $Y$ by changing the state of all vertices in
unlocked MCSs.
\end{teorema}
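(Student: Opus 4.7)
The plan is to first derive structural constraints that any predecessor $Y'$ must satisfy, and then to verify that the canonical $Y'$ given in the statement meets them exactly when the two listed conditions hold. I would start by showing that every predecessor is constant on each MCS: if adjacent $u, v$ have $Y(u) = Y(v) = s$ but $Y'(u) \neq Y'(v)$, then each has a neighbor in the opposite $Y'$-state, so the $k = 1$ rule flips both and yields $Y(u) = -Y'(u) \neq -Y'(v) = Y(v)$, a contradiction. By connectedness of each MCS, every MCS $M$ with $Y$-state $s$ therefore has $Y'(M) \in \{s, -s\}$, which I encode as $f(M) \in \{0, 1\}$ for \emph{unchanged} and \emph{flipped}.

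Next I would translate the update rule into per-MCS conditions. If $f(M) = 0$, then no $v \in M$ flips, so no neighbor of $v$ carries $Y'$-state $-s$; since any neighboring MCS $M'$ has $Y$-state $-s$ by maximality of $M$, this forces $f(M') = 1$. If instead $f(M) = 1$, then each $v \in M$ must see some neighbor in $Y'$-state $s$, and because intra-$M$ neighbors all sit in $Y'$-state $-s$, the witness must lie in a neighboring MCS $M'$ with $f(M') = 1$. Any locked MCS $M$ has $f(M) = 0$ forced on it, since its witnessing vertex has every neighbor inside $M$, hence every neighbor in its own $Y'$-state, and cannot meet the flip condition.

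Necessity of both conditions then follows quickly. If two locked MCSs $M_1, M_2$ are neighbors, both have $f = 0$, yet applying the $f = 0$ rule to a boundary vertex of $M_1$ adjacent to $M_2$ demands $f(M_2) = 1$, a contradiction. For the second condition, suppose a vertex $v$ of an unlocked MCS $M$ has all its outside-$M$ neighbors inside locked MCSs, so each such neighbor sits in $Y'$-state $-s$. If $f(M) = 0$, one of these outside neighbors is in state $-s$ opposite to $Y'(v) = s$, triggering a forbidden flip of $v$; if $f(M) = 1$, every neighbor of $v$ sits in $Y'$-state $-s$ (intra-$M$ and outside alike), so no neighbor is in state $s$ to trigger the required flip. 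Either branch contradicts the assumed predecessor.

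For sufficiency I would take the $Y'$ advertised in the theorem, namely $f(M) = 0$ on locked MCSs and $f(M) = 1$ on unlocked MCSs. For a locked $M$, the first condition makes every neighboring MCS unlocked, hence flipped, so every outside neighbor of any $v \in M$ sits in $Y'$-state $s$, matching $f(M) = 0$. For an unlocked $M$, the second condition supplies every $v \in M$ with a neighbor in another unlocked MCS, which our assignment places in $Y'$-state $s$, providing the witness that makes $v$ flip, matching $f(M) = 1$. The main obstacle is the per-vertex case analysis for the necessity of the second condition, because one must rule out \emph{both} choices of $f(M)$ for every unlocked MCS, not only the canonical flipped one.
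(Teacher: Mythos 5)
Your proposal is correct and follows essentially the same route as the paper: partition into MCSs, note that any predecessor is constant on each MCS and that locked MCSs cannot flip, then verify the two conditions are exactly what is needed, with the canonical $Y'$ (flip all unlocked MCSs) witnessing sufficiency. You simply spell out in more detail (via the $f$-labeling and the forced-flip propagation between neighboring MCSs) the observations the paper states more tersely before and inside its proof.
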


\begin{proof}
The case of a single (necessarily locked) MCS is trivial. If, on the other hand,
more than one MCS exists in $G$, then clearly the two conditions suffice for
$Y'$ to exist and be as stated: in one time step from $Y'$, the state of every
vertex in a locked MCS remains unchanged and that of every vertex in an unlocked
MCS changes, thus yielding $Y$.

It remains for necessity to be shown. We do this by noting that, should the
first condition fail and at least two locked MCSs be neighbors, any prospective
$Y'$ would have to differ from $Y$ in all vertices of each of these MCSs, but
the presence of locked vertices in them would make it impossible for $Y$ to be
obtained in one time step. Should the second condition be the one to fail and at
least one vertex in an unlocked MCS have neighbors outside its MCS only in
locked MCSs, any prospective $Y'$ would have to differ from $Y$ in all vertices
of such an unlocked MCS. Once again it would be impossible to obtain $Y$ in one
time step due to the locked vertices. It follows that both conditions are
necessary for $Y'$ to exist.
\end{proof}

By Theorem \ref{k=1}, we can easily solve \textsc{Pre($1$)} in $O(n+m)$ time.

\section{\boldmath NP-completeness of \textsc{Pre($k$)} for $k>1$}\label{np}

We first note that the NP-completeness proof of \textsc{Predecessor Existence}
for finite cellular automata \cite{sutner} cannot be directly extended to
$k$-reversible processes in graphs for $k>1$. Sutner's proof only shows that
there exist finite cellular automata for which \textsc{Predecessor Existence} is
NP-complete. In other words, it depends on the vertex update rule being used in
the cellular automaton. A different approach is then needed.

We present a reduction from a satisfiability problem known as \textsc{3Sat
Exact\-ly-Two}. This problem is the variation of the \textsc{3Sat} problem in
which each clause must be satisfied by exactly two positive literals. We start
by proving that \textsc{3Sat Exactly-Two} is NP-complete. 

\begin{lema}
\label{lema:3sat}
\textsc{3Sat Exactly-Two} is NP-complete.
\end{lema}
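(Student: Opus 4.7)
First, \textsc{3Sat Exactly-Two} is obviously in NP: a truth assignment is a polynomial-size certificate, and checking that each clause has exactly two true positive literals takes $O(m)$ time.

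For the hardness side, I would reduce from \textsc{1-in-3 Sat} restricted to positive (monotone) clauses, a variant well known to be NP-complete. The guiding observation is that, in a clause of three positive literals, ``exactly one literal is true'' and ``exactly two literals are true'' are complementary under bitwise negation of the variables: if $(x_a,x_b,x_c)$ has exactly one true coordinate, then $(\neg x_a,\neg x_b,\neg x_c)$ has exactly two true coordinates, and conversely. This suggests the following reduction. Given an instance $\phi$ of monotone \textsc{1-in-3 Sat}, for each variable $x_i$ of $\phi$ I introduce a companion variable $x_i^{\star}$ (intended to take the value $\neg x_i$); for each original clause $(x_a \vee x_b \vee x_c)$ of $\phi$ I output the clause $(x_a^{\star} \vee x_b^{\star} \vee x_c^{\star})$ in the target instance $\phi'$; and then I add a complementation gadget forcing $x_i^{\star}=\neg x_i$ for every $i$, built entirely out of Exactly-Two clauses.

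The part I expect to be the main technical obstacle is the complementation gadget, since the only primitive available is ``exactly two of three positive literals true''. My plan is to first construct a distinguished variable $T$ that is pinned to the value $1$ by a small auxiliary sub-formula, and then, for each $i$, to add a single clause $(x_i \vee x_i^{\star} \vee T)$: with $T=1$ the Exactly-Two rule forces exactly one of $x_i,x_i^{\star}$ to be true, hence $x_i^{\star}=\neg x_i$. The auxiliary sub-formula that fixes $T$ requires some care but is short and can be verified by direct case analysis; an alternative, if the paper allows mixed positive/negative literals in Exactly-Two clauses, is to skip the gadget entirely and literally output $(\neg x_a \vee \neg x_b \vee \neg x_c)$ as the transformed clause.

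Correctness of the whole reduction is then almost automatic: a 1-in-3 satisfying assignment for $\phi$ extends uniquely to a satisfying assignment for $\phi'$ by putting $x_i^{\star}=\neg x_i$ and $T=1$, under which every output clause has exactly two true positive literals; conversely, the complementation gadget guarantees that any Exactly-Two witness for $\phi'$ restricts to a 1-in-3 witness for $\phi$. Since the reduction is clearly polynomial in $|\phi|$, NP-hardness, and hence NP-completeness, follows.
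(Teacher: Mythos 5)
Your overall strategy is workable, but as written it has one real hole, and it is exactly the step you flag as the main obstacle: the sub-formula that pins $T$ to true is never constructed, and without it the reduction is not complete. The gap is fillable: with fresh variables $t,a,b,c$ the three clauses $(t \vee a \vee b)$, $(t \vee a \vee c)$, $(t \vee b \vee c)$ under the exactly-two rule force $t=0$ and $a=b=c=1$ (if $t=1$ each pair $\{a,b\},\{a,c\},\{b,c\}$ would need exactly one true member, giving the parity contradiction $2(a+b+c)=3$), so $a$ serves as your pinned-true variable $T$; then $(x_i \vee x_i^{\star} \vee a)$ indeed forces $x_i^{\star}=\neg x_i$, and the rest of your argument goes through. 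So the approach is sound once this gadget is supplied, and it even proves something slightly stronger than the lemma, namely hardness of the monotone (all-positive) restriction of \textsc{3Sat Exactly-Two}.

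That said, the paper's definition does allow negated literals in the clauses (its later construction uses the clause $x_{1} \vee \neg x_{2} \vee \neg x_{3}$, and ``satisfied by exactly two positive literals'' means two literals evaluating to true), so your parenthetical ``alternative'' is in fact the paper's entire proof: take an instance of \textsc{3Sat Exactly-One} (one-in-three SAT with negations, NP-complete by Garey--Johnson) and invert every literal; an assignment makes exactly one literal true in each original clause if and only if it makes exactly two literals true in each inverted clause. Compared with that one-line reduction, your route costs you the companion variables and the pinning gadget, but buys the monotone strengthening; if you keep your version, write out the gadget and its case analysis explicitly rather than deferring it.
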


\begin{proof}
The problem is trivially in NP. We proceed with the reduction from another
variation of the \textsc{3Sat} problem, known as \textsc{3Sat Exactly-One}
\cite{jhonson}, which is NP-complete and asks whether there exists an assignment
of variables satisfying each clause by exactly one positive literal.

The reduction is simple and consists of inverting all literals in all clauses of
an instance $S$ of \textsc{3Sat Exactly-One}, resulting in an instance $S'$ of
\textsc{3Sat Exactly-Two}. It is easy to check that a solution for $S$ directly
gives a solution for $S'$, and conversely, a solution for $S'$ directly gives a
solution for $S$.
\end{proof}

\begin{teorema}
\label{np-completude}
\textsc{Pre($k$)} is NP-complete for $k>1$.
\end{teorema}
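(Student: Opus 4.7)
Plan. Membership of \textsc{Pre($k$)} in NP is immediate: given a candidate predecessor $Y'$, apply $F_G^k$ in polynomial time and test whether the result equals $Y$. For NP-hardness, the plan is to reduce from \textsc{3Sat Exactly-Two}, whose NP-completeness is established in Lemma~\ref{lema:3sat}. Given an instance with variables $x_1,\dots,x_n$ and clauses $C_1,\dots,C_m$, I would build in polynomial time a graph $G$ and a target configuration $Y$ so that $Y$ admits a predecessor iff the formula is satisfiable in the exactly-two sense. The construction rests on three ingredients: \emph{anchors}, which are auxiliary vertices of degree strictly less than $k$ and therefore can never satisfy the flip condition, so that their $Y$-value is forced upon every predecessor $Y'$ and they can be used to bias local counts at their neighbors; a \emph{variable gadget} attached to a distinguished vertex $v_i$ for each $x_i$, equipped with matched true- and false-anchors and auxiliary pins so that $Y'(v_i)\in\{+1,-1\}$ is a genuinely free binary choice encoding the truth value of $x_i$; and \emph{literal gadgets} that, via short copy-chains of anchor-pinned vertices, produce for each occurrence of $x_i$ in some clause a literal vertex $\ell$ whose value in any predecessor equals $Y'(v_i)$ or $-Y'(v_i)$ according to the polarity of the occurrence.

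For each clause $C_j$ the hardness driver is a pair of \emph{clause vertices} $c_j^+$ and $c_j^-$. Each is joined to the three literal vertices of $C_j$, to $k-2$ true-anchors and to $k-1$ false-anchors; set $Y(c_j^+)=+1$ and $Y(c_j^-)=-1$. Writing $t\in\{0,1,2,3\}$ for the number of literal neighbors in state $+1$ in $Y'$, a short case analysis of the update rule at $c_j^+$ (case $Y'(c_j^+)=+1$ forces $t=3$, case $Y'(c_j^+)=-1$ forces $t\ge 2$) shows that a consistent $Y'(c_j^+)$ exists iff $t\ge 2$, and a symmetric analysis at $c_j^-$ shows that a consistent $Y'(c_j^-)$ exists iff $t\le 2$. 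The pair therefore enforces exactly two true literals per clause, matching \textsc{3Sat Exactly-Two}; note that the anchor counts $k-2$ and $k-1$ are nonnegative for every $k\ge 2$, so the construction is uniform in $k$.

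The main technical hurdle is the literal gadget: it must rigidly copy or negate $Y'(v_i)$ into each clause occurrence, must not propagate extra constraints back onto $v_i$ that would collapse the free binary choice, and must be realisable uniformly for all $k\ge 2$ with nonnegative anchor counts. I would build it from a short chain of degree-controlled vertices, each carrying enough attached true- and false-anchors to make its own update rule equivalent to a forced equality (respectively, inequality) between the $Y'$-values of its two non-anchor neighbors, the analysis being analogous to that of the clause vertices. Once local consistency of every gadget vertex is verified, the reduction's equivalence follows in both directions: from a satisfying exactly-two assignment one constructs $Y'$ by setting $Y'(v_i)$ accordingly and propagating through the literal and clause gadgets; conversely, the restriction of any predecessor $Y'$ to the vertices $v_i$ reads off a satisfying assignment, completing the proof.
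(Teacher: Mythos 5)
Your NP-membership argument and the top-level plan coincide with the paper's: the same source problem (\textsc{3Sat Exactly-Two}, via Lemma~\ref{lema:3sat}), the same use of low-degree ``anchor'' vertices whose states are frozen in any predecessor (the paper's pendant vertices $u_{i,j},p_{i,j},w_{i,j},w'_{i,j},b_{i,j},b'_{i,j}$), and your clause pair $c_j^+,c_j^-$ is functionally the paper's pair $c_j$ (target $+1$, with $k-2$ pendants forced to $-1$) and $c'_j$ (target $-1$, with $k-1$ pendants forced to $-1$); your threshold analysis there (a consistent $Y'(c_j^+)$ exists iff $t\geq 2$, a consistent $Y'(c_j^-)$ exists iff $t\leq 2$) is correct and matches the paper's.

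The genuine gap is in the variable/literal machinery, which you yourself call the main technical hurdle and only sketch: the copy/negation chain you describe cannot be built the way you state it. A single anchored vertex $m$ with target $Y(m)$ and two non-anchor neighbors $a,b$ imposes only a one-sided threshold on $s=|\{a,b\}\cap\{\text{state }+1\}|$: for $Y(m)=+1$ each of the two choices of $Y'(m)$ yields a condition of the form $s\geq(\text{const})$, and their disjunction is again a lower bound; dually, for $Y(m)=-1$ one gets an upper bound. So no single vertex's update rule is ``equivalent to a forced equality'' of its two non-anchor neighbors (the set $s\in\{0,2\}$ is not a threshold), and even forced inequality requires a \emph{pair} of vertices with complementary targets adjacent to both $a$ and $b$, one enforcing $s\geq 1$ and the other $s\leq 1$. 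That pair is exactly the paper's $z_i,z'_i$ gadget (with $k-2$ pendants forced to $-1$, resp.\ $+1$), which forces exactly one of $x_i,\neg x_i$ to be $+1$ in any predecessor; the paper then avoids copy chains entirely by making both polarities literal vertices and joining each directly to the clause pairs of all clauses containing it, with $2k-3$ pendants ($k-1$ at $+1$, $k-2$ at $-1$) so that every literal vertex reaches $+1$ in one step regardless of its own predecessor state --- this balancing is also what settles the forward direction (that the constructed $Y'$ maps to $Y$ at \emph{every} gadget vertex, including variable and chain vertices whose neighbors' states vary), which your outline defers. To repair your argument you would need to replace each chain vertex by such a complementary pair (a copy link then being two inequality links in series) and verify the forward simulation, or simply adopt the two-literal-vertices-per-variable design.
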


\begin{proof}
Given two configurations $Y$ and $Y'$, verifying whether $Y'$ is a predecessor
configuration of $Y$ is straightforward and can be done by simulating one step
of the $k$-reversible process starting with configuration $Y'$. Simulating one
step of the process takes $O(n + m)$ time and the final comparison between the
resulting configuration and $Y$ takes $O(n)$ time; thus, \textsc{Pre($k$)} is in
NP. The remainder of the proof is a reduction from \textsc{3Sat Exactly-Two},
which by Lemma \ref{lema:3sat} is NP-complete.

Let $S$ be an arbitrary instance of \textsc{3Sat Exactly-Two} with the $M$
clauses $c_{1}, c_{2}, \dots, c_{M}$ and the $N$ variables $x_{1}, x_{2},
\dots, x_{N}$. We construct an instance $(G, Y)$ of \textsc{Pre$(k)$} from $S$
as follows.

Vertex set $V(G)$ is the union of:
\begin{itemize}
\item $\{x_{i}, \neg{x_{i}}\}$, for each variable $x_{i}$ in $S$;
\item $\{z_{i}, z'_{i}\}$, for each variable $x_{i}$ in $S$;
\item $\{u_{i,1}, \dots , u_{i,2k-3}\}$, for each variable $x_{i}$ in $S$;
\item $\{p_{i,1}, \dots , p_{i,2k-3}\}$, for each variable $x_{i}$ in $S$;
\item $\{w_{i,1}, \dots , w_{i,k-2}\}$, for each variable $x_{i}$ in $S$,
provided $k > 2$;
\item $\{w'_{i,1}, \dots , w'_{i,k-2}\}$, for each variable $x_{i}$ in $S$,
provided $k > 2$;    
\item $\{c_{i}, c'_{i}\}$, for each clause $c_{i}$ in $S$;
\item $\{b_{i,1}, \dots , b_{i,k-2}\}$, for each clause $c_{i}$ in $S$, provided
$k > 2$;
\item $\{b'_{i,1}, \dots , b'_{i,k-1}\}$, for each clause $c_{i}$ in $S$. 
\end{itemize}

Vertices $x_{i}$ and $\neg{x_{i}}$ are called \textit{literal vertices} and
vertices $c_{i}$ and $c'_{i}$ are called \textit{clause vertices}. If $x$ is a
neighbor of $u$ and $x$ is a literal vertex, then we say that $x$ is a literal
neighbor of $u$. Similarly, if $x$ is a neighbor of $u$ and $x$ is a clause
vertex, then $x$ is a clause neighbor of $u$.

Edge set $E(G)$ is the union of:
\begin{itemize}
\item $\{(x_i,z_i), (x_i,z'_i), (\neg{x_i},z_i), (\neg{x_i},z'_i)\}$, for each
variable $x_{i}$ in $S$;
\item $\{(x_i,u_{i,1}), \dots, (x_i,u_{i,2k-3})\}$, for each variable $x_{i}$
in $S$;
\item $\{(\neg{x_i},p_{i,1}), \dots, (\neg{x_i},p_{i,2k-3})\}$, for each
variable $x_{i}$ in $S$;
\item $\{(z_i,w_{i,1}), \dots , (z_i,w_{i,k-2})\}$, for each variable $x_{i}$ in
$S$, provided $k > 2$;
\item $\{(z'_i,w'_{i,1}), \dots, (z'_i,w'_{i,k-2})\}$, for each variable $x_{i}$
in $S$, provided $k > 2$;
\item $\{(c'_{j}, b'_{j,1}), \dots, (c'_{j}, b'_{j,k-1})\}$, for each clause
$c_{j}$ in $S$;
\item $\{(c_{j}, b_{j,1}), \dots, (c_{j}, b_{j,k-2})\}$, for each clause
$c_{j}$ in $S$, provided $k > 2$;
\item $\{(c_{j}, x_{i}), (c'_{j}, x_{i})\}$, for each literal $x_i$ occurring in
clause $c_j$;
\item $\{(c_{j}, \neg{x_{i}}), (c'_{j}, \neg{x_{i}})\}$, for each literal
$\neg{x_i}$ occurring in clause $c_j$.
\end{itemize}

We finish the construction by defining the target configuration $Y$:
\begin{itemize} 
\item $Y(x_{i}) = Y(\neg{x_{i}}) =  +1$,  for $1 \leq i \leq N$; 
\item $Y(z_{i}) = +1$, $Y(z'_{i}) = -1$, for $1 \leq i \leq N$; 
\item $Y(u_{i,j}) = Y(p_{i,j}) = +1$, for $1 \leq i \leq N$ and $1 \leq j \leq
k-1$;
\item $Y(u_{i,j}) = Y(p_{i,j}) = -1$, for $1 \leq i \leq N$ and $k \leq j \leq
2k-3$, provided $k > 2$;
\item $Y(w_{i,j}) = -1$, $Y(w'_{i,j}) = +1$, for $1 \leq i \leq N$ and $1 \leq j\leq k-2$, provided $k > 2$; 
\item $Y(c_{i}) = +1$, $Y(c'_{i}) = -1$, for $1 \leq i \leq M$;
\item $Y(b_{i,j}) = -1$, for $1 \leq i \leq M$ and $1 \leq j \leq k-2$, provided
$k > 2$; 
\item $Y(b'_{i,j}) = -1$, for $1 \leq i \leq M$ and $1 \leq j \leq k-1$,
provided $k > 2$. 
\end{itemize}

Figure \ref{fig:config_y} illustrates the case of $k = 3$, $M = 1$, and $N = 3$,
the single clause being $c_{1} = x_{1} \vee \neg{x_{2}} \vee \neg{x_{3}}$.

\begin{figure}[t]
\begin{tikzpicture}
[scale=.7,auto=left,inner sep=0pt,minimum size=0.55cm]
\node[draw,fill=lightgray,circle] (n6)  at (1,10) {$x_{1}$};
\node[draw,fill=lightgray,circle] (n4)  at (3,10) {$\neg{x_{1}}$};
\node[draw,fill=lightgray,circle] (n5)  at (7,10) {$x_{2}$};
\node[draw,fill=lightgray,circle] (n1)  at (9,10) {$\neg{x_{2}}$};
\node[draw,fill=lightgray,circle] (n2)  at (13,10) {$x_{3}$};
\node[draw,fill=lightgray,circle] (n3)  at (15,10) {$\neg{x_{3}}$};
\node[draw,fill=lightgray,circle] (n7)  at (5,3) {${c_{1}}$};
\node[draw,fill=white,circle] (n8)  at (10,3) {${c'_{1}}$};
\node[draw,fill=white,circle] (n9)  at (9,1) {${b'_{1,1}}$};
\node[draw,fill=white,circle] (n10)  at (11,1) {${b'_{1,2}}$};
\node[draw,fill=white,circle] (n11)  at (5,1) {${b_{1,1}}$};
\node[draw,fill=lightgray,circle] (n12)  at (0,11) {$u_{1,1}$};
\node[draw,fill=lightgray,circle] (n13)  at (0,10) {$u_{1,2}$};
\node[draw,fill=white,circle] (n100)  at (0,9) {$u_{1,3}$};
\node[draw,fill=lightgray,circle] (n14)  at (4,11) {$p_{1,1}$};
\node[draw,fill=lightgray,circle] (n15)  at (4,10) {$p_{1,2}$};
\node[draw,fill=white,circle] (n101)  at (4,9) {$p_{1,3}$};
\node[draw,fill=lightgray,circle] (n16)  at (6,11) {$u_{2,1}$};
\node[draw,fill=lightgray,circle] (n17)  at (6,10) {$u_{2,2}$};
\node[draw,fill=white,circle] (n102)  at (6,9) {$u_{2,3}$};
\node[draw,fill=lightgray,circle] (n18)  at (10,11) {$p_{2,1}$};
\node[draw,fill=lightgray,circle] (n19)  at (10,10) {$p_{2,2}$};
\node[draw,fill=white,circle] (n103)  at (10,9) {$p_{2,3}$};
\node[draw,fill=lightgray,circle] (n20)  at (12,11) {$u_{3,1}$};
\node[draw,fill=lightgray,circle] (n21)  at (12,10) {$u_{3,2}$};
\node[draw,fill=white,circle] (n104)  at (12,9) {$u_{3,3}$};
\node[draw,fill=lightgray,circle] (n22)  at (16,11) {$p_{3,1}$};
\node[draw,fill=lightgray,circle] (n23)  at (16,10) {$p_{3,2}$};
\node[draw,fill=white,circle] (n105)  at (16,9) {$p_{3,3}$};
\node[draw,fill=lightgray,circle] (n24)  at (1,12) {$z_{1} $};
\node[draw,fill=white,circle] (n25)  at (3,12) {$z'_{1}$};
\node[draw,fill=lightgray,circle] (n26)  at (7,12) {$z_{2}$};
\node[draw,fill=white,circle] (n27)  at (9,12) {$z'_{2}$};
\node[draw,fill=lightgray,circle] (n28)  at (13,12) {$z_{3}$};
\node[draw,fill=white,circle] (n29)  at (15,12) {$z'_{3}$};
\node[draw,fill=white,circle] (n30)  at (1,14) {$w_{1,1}$}; 
\node[draw,fill=white,circle] (n31)  at (7,14) {$w_{2,1}$}; 
\node[draw,fill=white,circle] (n32)  at (13,14) {$w_{3,1}$}; 
\node[draw,fill=lightgray,circle] (n33)  at (3,14) {$w'_{1,1}$}; 
\node[draw,fill=lightgray,circle] (n34)  at (9,14) {$w'_{2,1}$}; 
\node[draw,fill=lightgray,circle] (n35)  at (15,14) {$w'_{3,1}$};  
\foreach \from/\to in {n100/n6, n101/n4, n102/n5, n103/n1, n104/n2, n105/n3,
n8/n9, n8/n10, n11/n7,n7/n6,n7/n3,n7/n1,n8/n6,n8/n3,n8/n1,n24/n6,n24/n4,n25/n6,
n25/n4, n26/n5, n26/n1, n27/n5, n27/n1, n28/n2,n28/n3, n29/n2, n29/n3, n12/n6,
n13/n6, n14/n4, n15/n4, n16/n5, n17/n5, n18/n1, n19/n1, n20/n2, n21/n2, n22/n3,
n23/n3, n30/n24, n31/n26, n32/n28, n33/n25, n34/n27, n35/n29}
\draw (\from) -- (\to);
\end{tikzpicture}
\caption{Graph $G$ in the instance of \textsc{Pre(3)} having $M = 1$ and $N = 3$
for which $c_{1} = x_{1} \vee \neg{x_{2}} \vee \neg{x_{3}}$. Shaded circles
indicate state $+1$ in configuration $Y$; empty circles indicate state $-1$.}
\label{fig:config_y}
\end{figure}
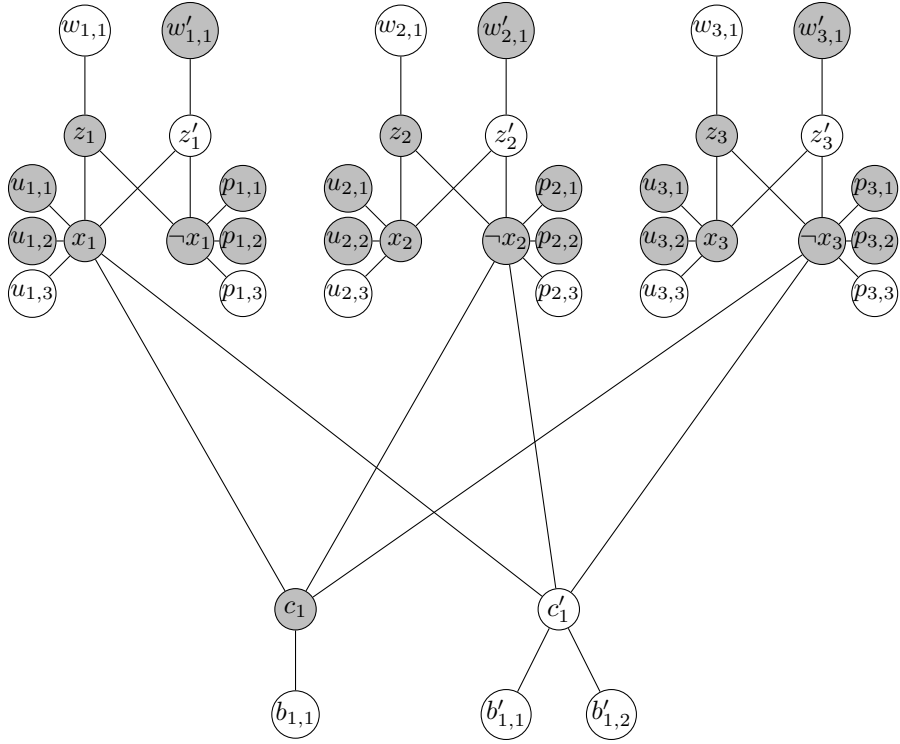

For each variable in $S$, at most $6k - 6$ vertices are created, and for each
clause at most $2k - 1$ vertices, resulting in at most $n = N(6k-6) + M(2k-1)$
vertices. Likewise, the total number of edges is at most $m = N(6k-6) +
M(2k+3)$. Since $k$ is a constant, we have a polynomial-time reduction.
 
We proceed by showing that if $S$ is satisfiable then $Y$ has at least one
predecessor configuration. In fact, given any satisfying truth assignment for
$S$, we can construct a predecessor configuration $Y'$ of $Y$ in the following
manner:
\begin{itemize}
\item $Y'(x_{i}) = +1$, if variable $x_{i}$ is true in the given assignment; 
\item $Y'(x_{i}) = -1$, if variable $x_{i}$ is false in the given assignment; 
\item $Y'(\neg{x_{i}}) = -Y'(x_{i})$;
\item $Y'({z_{i}}) = +1$, $Y'(z'_{i}) = -1$, for $1\leq i \leq N$;
\item $Y'(u_{i,j}) = Y'(p_{i,j}) = +1$, for $1 \leq i \leq N$ and $1 \leq j
\leq k-1$;
\item $Y'(u_{i,j}) = Y'(p_{i,j}) = -1$, for $1 \leq i \leq N$ and $k \leq j
\leq 2k-3$;
\item $Y'(w_{i,j}) = -1$,  $Y'(w'_{i,j}) = +1$, for $1 \leq i \leq N$ and $1
\leq j \leq k-2$, provided $k > 2$;
\item $Y'(c_{i}) = Y'(c'_{i}) = +1$, for $1 \leq i \leq M$;
\item $Y'(b_{i,j}) = -1$, for $1 \leq i \leq M$ and $1 \leq j \leq k-2$,
provided $k > 2$;
\item $Y'(b'_{i,j}) = -1$, for $1 \leq i \leq M$ and $1 \leq j \leq k-1$.
\end{itemize}

Figure \ref{fig:config_ylinha} shows predecessor configuration $Y'$ for the $Y$
given in Figure \ref{fig:config_y}. As an example, we have let all of $x_{1}$,
$\neg{x_{2}}$, and $x_{3}$ be true in the satisfying assignment.

By construction, and considering configuration $Y'$ as described above, each
vertex $x_{i}$ or $\neg{x_{i}}$ has at least $k$ neighbors in state $+1$ in
$Y'$ and exactly $k-1$ neighbors in state $-1$. This condition is sufficient to
guarantee that every literal vertex reaches state $+1$ in one time step,
regardless of its state in configuration $Y'$.

Each of vertices $u_{i,j}$, $p_{i,j}$, $w_{i,j}$, $w'_{i,j}$ , $b_{i,j}$, and
$b'_{i,j}$ has only one neighbor, and will obviously keep its state in the next
configuration. Each of vertices $z_{i}$ and $z'_{i}$ has at most $k - 1$
neighbors in the opposite state, since vertices $x_{i}$ and $\neg{x_{i}}$ have
mutually opposite states. So $z_{i}$ and $z'_{i}$ remain unchanged as well.

In order for configuration $Y$ to be obtained after one time step, in
configuration $Y'$ no vertex $c_{i}$ can have more than one literal neighbor in
state $-1$, which would lead to state $-1$ for $c_i$ in the next configuration.
Since $S$ is satisfiable, there are exactly two positive literals for each
clause in $S$, and by construction of $Y'$, each vertex $c_{i}$ has exactly one
literal neighbor in state $-1$. Similarly, every vertex $c'_{i}$ needs at least
one literal neighbor in state $-1$, and as the construction guarantees, $c'_{i}$
has exactly one literal neighbor in state $-1$.

\begin{figure}[t]
\begin{tikzpicture}
[scale=.7,auto=left,inner sep=0pt,minimum size=0.55cm]
\node[draw,fill=lightgray,circle] (n6)  at (1,10) {$x_{1}$};
\node[draw,fill=white,circle] (n4)  at (3,10) {$\neg{x_{1}}$};
\node[draw,fill=white,circle] (n5)  at (7,10) {$x_{2}$};
\node[draw,fill=lightgray,circle] (n1)  at (9,10) {$\neg{x_{2}}$};
\node[draw,fill=lightgray,circle] (n2)  at (13,10) {$x_{3}$};
\node[draw,fill=white,circle] (n3)  at (15,10) {$\neg{x_{3}}$};
\node[draw,fill=lightgray,circle] (n7)  at (5,3) {${c_{1}}$};
\node[draw,fill=lightgray,circle] (n8)  at (10,3) {${c'_{1}}$};
\node[draw,fill=white,circle] (n9)  at (9,1) {${b'_{1,1}}$};
\node[draw,fill=white,circle] (n10)  at (11,1) {${b'_{1,2}}$};
\node[draw,fill=white,circle] (n11)  at (5,1) {${b_{1,1}}$};
\node[draw,fill=lightgray,circle] (n12)  at (0,11) {$u_{1,1}$};
\node[draw,fill=lightgray,circle] (n13)  at (0,10) {$u_{1,2}$};
\node[draw,fill=white,circle] (n100)  at (0,9) {$u_{1,3}$};
\node[draw,fill=lightgray,circle] (n14)  at (4,11) {$p_{1,1}$};
\node[draw,fill=lightgray,circle] (n15)  at (4,10) {$p_{1,2}$};
\node[draw,fill=white,circle] (n101)  at (4,9) {$p_{1,3}$};
\node[draw,fill=lightgray,circle] (n16)  at (6,11) {$u_{2,1}$};
\node[draw,fill=lightgray,circle] (n17)  at (6,10) {$u_{2,2}$};
\node[draw,fill=white,circle] (n102)  at (6,9) {$u_{2,3}$};
\node[draw,fill=lightgray,circle] (n18)  at (10,11) {$p_{2,1}$};
\node[draw,fill=lightgray,circle] (n19)  at (10,10) {$p_{2,2}$};
\node[draw,fill=white,circle] (n103)  at (10,9) {$p_{2,3}$};
\node[draw,fill=lightgray,circle] (n20)  at (12,11) {$u_{3,1}$};
\node[draw,fill=lightgray,circle] (n21)  at (12,10) {$u_{3,2}$};
\node[draw,fill=white,circle] (n104)  at (12,9) {$u_{3,3}$};
\node[draw,fill=lightgray,circle] (n22)  at (16,11) {$p_{3,1}$};
\node[draw,fill=lightgray,circle] (n23)  at (16,10) {$p_{3,2}$};
\node[draw,fill=white,circle] (n105)  at (16,9) {$p_{3,3}$};
\node[draw,fill=lightgray,circle] (n24)  at (1,12) {$z_{1}$};
\node[draw,fill=white,circle] (n25)  at (3,12) {$z'_{1}$};
\node[draw,fill=lightgray,circle] (n26)  at (7,12) {$z_{2}$};
\node[draw,fill=white,circle] (n27)  at (9,12) {$z'_{2}$};
\node[draw, fill=lightgray,circle] (n28)  at (13,12) {$z_{3}$};
\node[draw,fill=white,circle] (n29)  at (15,12) {$z'_{3}$};
\node[draw,fill=white,circle] (n30)  at (1,14) {$w_{1,1}$}; 
\node[draw,fill=white,circle] (n31)  at (7,14) {$w_{2,1}$}; 
\node[draw,fill=white,circle] (n32)  at (13,14) {$w_{3,1}$}; 
\node[draw,fill=lightgray,circle] (n33)  at (3,14) {$w'_{1,1}$}; 
\node[draw,fill=lightgray,circle] (n34)  at (9,14) {$w'_{2,1}$}; 
\node[draw,fill=lightgray,circle] (n35)  at (15,14) {$w'_{3,1}$}; 
\foreach \from/\to in {n100/n6, n101/n4, n102/n5, n103/n1, n104/n2, n105/n3,
n8/n9, n8/n10, n11/n7,n7/n6,n7/n3,n7/n1,n8/n6,n8/n3,n8/n1,n24/n6,n24/n4,n25/n6,
n25/n4, n26/n5, n26/n1, n27/n5, n27/n1, n28/n2,n28/n3, n29/n2, n29/n3, n12/n6,
n13/n6, n14/n4, n15/n4, n16/n5, n17/n5, n18/n1, n19/n1, n20/n2, n21/n2, n22/n3,
n23/n3, n30/n24, n31/n26, n32/n28, n33/n25, n34/n27, n35/n29}
\draw (\from) -- (\to);
\end{tikzpicture}
\caption{Predecessor configuration $Y'$ for the $G$ and $Y$ of Figure
\ref{fig:config_y} when all of $x_1, \neg{x_2}$ and $x_3$ are true. Shaded
circles indicate state $+1$ in $Y'$; empty circles indicate state $-1$.}
\label{fig:config_ylinha}
\end{figure}
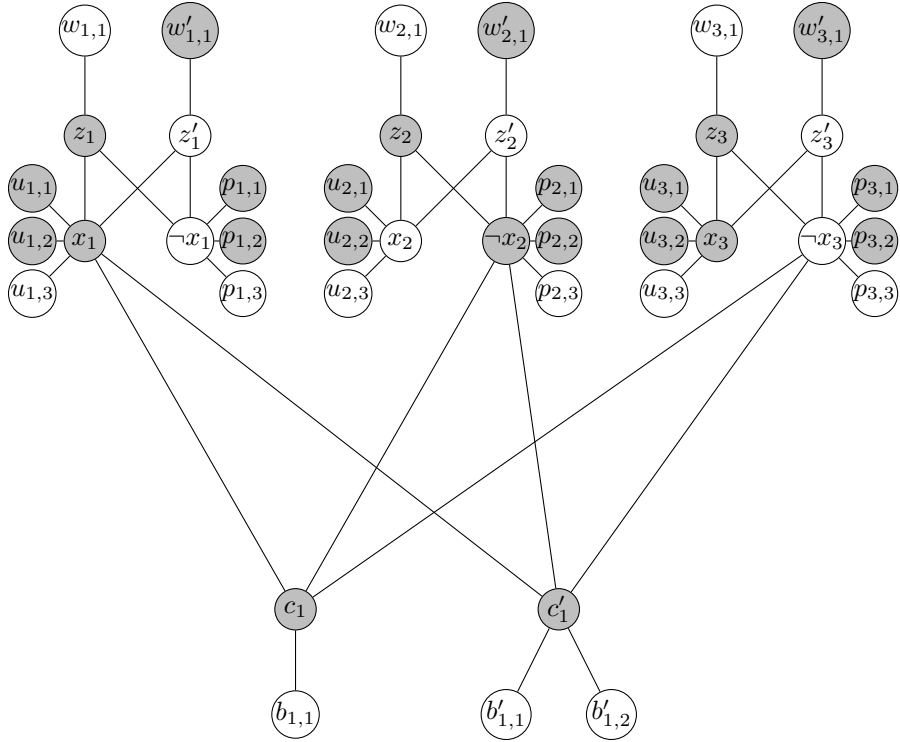

Hence, given the configuration $Y'$ as constructed, we see that the next
configuration is precisely configuration $Y$. We then conclude that whenever $S$
is satisfiable, $Y$ has at least one predecessor configuration.

Conversely, we now show that if $Y$ has at least one predecessor configuration
then $S$ is satisfiable.

In any predecessor configuration of $Y$, vertices $u_{i,j}$, $p_{i,j}$,
$w_{i,j}$, $w'_{i,j}$, $b_{i,j}$, and $b'_{i,j}$ should all be in the same
states as in configuration $Y$, since they all have only one neighbor each.

Vertices $z_{i}$ and $z'_{i}$ should also have the same states as in
configuration $Y$ in any predecessor configuration. For suppose that, in a
predecessor configuration, $z_{i}$ is in state $-1$; then necessarily $x_{i}$
and $\neg{x_{i}}$ should be in state $+1$, and consequently  vertex $z'_{i}$
would reach state $+1$ in the next configuration, which is different from its
state in configuration $Y$. An analogous argument holds for vertex $z'_{i}$.
This condition also forces $x_{i}$ to have a state different than $\neg{x_{i}}$
in any predecessor configuration of $Y$, since if both have the same state then
in the next configuration $z_{i}$ and $z'_{i}$ will also have the same state.

Each vertex $c_{i}$ must have state $+1$ in a predecessor configuration of $Y$,
otherwise every literal neighbor of $c_{i}$ would need to have state $-1$ in the
predecessor configuration and consequently vertex $c_i$ would not change to
state $+1$, which is its state in $Y$. Also, it must be the case that at least
two of the literal neighbors of $c_i$ have state $+1$, otherwise $c_{i}$ would
have state $-1$ in the next configuration, thus not matching configuration $Y$.

Each vertex $c'_{i}$ has the same literal neighbors as vertex $c_{i}$. As we
know that some of these neighbors have state $+1$ in a predecessor
configuration, $c'_{i}$ must have state $+1$, otherwise all of these neighbors
would need to have state $-1$ in the predecessor configuration, contradicting
the fact that some of them have state $+1$. Along with the restriction imposed
by $c_{i}$, we have that exactly two of the three literal vertices associated
with clause $c_{i}$ must have state $+1$ in a predecessor configuration.

Hence, given any predecessor configuration of $Y$, we can associate with any
literal the value true if its vertex has state $+1$, and value false otherwise.
The construction guarantees that this will satisfy every clause with exactly two
positive literals and that no opposite literals will have the same assignment. 

We conclude that if $Y$ has a predecessor configuration then $S$ is satisfiable.
\end{proof}

\begin{corolario}
\textsc{\textsc{Pre$(k)$}} on bipartite graphs is NP-complete for $k > 1$.
\label{cor:bipartido}
\end{corolario}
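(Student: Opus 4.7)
The plan is to observe that the graph $G$ constructed in the proof of Theorem~\ref{np-completude} is already bipartite, so the same reduction from \textsc{3Sat Exactly-Two} witnesses NP-hardness even when the input is restricted to bipartite graphs; membership in NP is inherited from the unrestricted problem.

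Concretely, I would exhibit an explicit $2$-coloring of $V(G)$ and verify that every edge listed in the construction crosses the partition. Put into part $A$ the literal vertices $x_i, \neg x_i$ together with the leaves $w_{i,j}, w'_{i,j}, b_{i,j}, b'_{i,j}$, and put into part $B$ the auxiliary vertices $z_i, z'_i$, the clause vertices $c_j, c'_j$, and the leaves $u_{i,j}, p_{i,j}$. Then I would walk down the bullet list of $E(G)$: the edges $(x_i,z_i),(x_i,z'_i),(\neg x_i,z_i),(\neg x_i,z'_i)$ and the edges $(x_i,u_{i,j})$, $(\neg x_i,p_{i,j})$ run from $A$ to $B$; the edges $(z_i,w_{i,j})$, $(z'_i,w'_{i,j})$, $(c'_j,b'_{j,\ell})$, $(c_j,b_{j,\ell})$ run from $B$ to $A$; and finally the edges between $c_j$ or $c'_j$ and the literal vertices that occur in clause $c_j$ again run from $B$ to $A$. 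Since every edge is accounted for and each crosses the partition, $G$ is bipartite.

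With bipartiteness in hand, I would conclude by noting that the mapping from an instance $S$ of \textsc{3Sat Exactly-Two} to the pair $(G, Y)$ is exactly the polynomial-time reduction of Theorem~\ref{np-completude}, with all of its correctness arguments intact; only the target class of graphs has been tightened. Membership of bipartite \textsc{Pre$(k)$} in NP follows immediately from the NP-verification argument used at the start of that theorem (simulating one step of the process in $O(n+m)$ time). Hence \textsc{Pre$(k)$} remains NP-complete on bipartite graphs for every $k>1$.

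The only real pitfall to watch for is a miscount of the adjacencies: I need to be sure that no edge was introduced whose two endpoints both lie among the literal/leaf group or both among the $z,z',c,c'$/leaf group. The slightly delicate cases are the $4$-cycles $x_i$-$z_i$-$\neg x_i$-$z'_i$-$x_i$ and the $4$-cycles $x_i$-$c_j$-$\neg x_{i'}$-$c'_j$-$x_i$ created by a shared clause, but both have even length and are consistent with the bipartition above, so no obstruction arises.
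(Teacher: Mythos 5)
Your proposal is correct and matches the paper's proof: the paper uses exactly the same bipartition, with $x_i,\neg x_i, w_{i,j}, w'_{i,j}, b_{i,j}, b'_{i,j}$ on one side and $c_j, c'_j, u_{i,j}, p_{i,j}, z_i, z'_i$ on the other, and concludes that the reduction of Theorem~\ref{np-completude} already establishes hardness on bipartite graphs. Your edge-by-edge verification is just a more explicit spelling-out of the same observation.
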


\begin{proof}
The graph constructed in the proof of Theorem \ref{np-completude} is bipartite
for the node sets
$\bigcup_{i,j} \{x_{i}, \neg{x_{i}}, b_{i,j}, b'_{i,j}, w_{i,j}, w'_{i,j}\}$
and $\bigcup_{i,j} \{c_{i}, c'_{i}, u_{i,j}, p_{i,j}, z_{i}, z'_{i} \}$.
\end{proof}

\section{Polynomial-time algorithms for trees}\label{trees}

In this section, we consider a tree $T$ rooted at an arbitrary vertex called
$\mathit{root}$. For each vertex $v$ in $T$, $\mathit{parent}_v$ denotes the
parent of $v$, and $\mathit{children}_v$ denotes the set of children of $v$. A
subtree of $T$ will be denoted by $T_{u}$, where $u$ is the root of the subtree.

It will be helpful to adopt a notation for a configuration of a subtree of $T$.
Denote by $Y_{v,t} = (Y_{t}(w_1), Y_{t}(w_2), \ldots, Y_{t}(w_{|T_v|}))$ the
configuration of states of the vertices in subtree $T_v$ at time $t$, where each
$w_i$ is one of the $|T_v|$ vertices of $T_v$. Notice that $Y_{v,t}$ is a
subsequence of $Y_{t}$ and its purpose is to refer to states of vertices in
subtree $T_v$ only; thus, if $v = \mathit{root}$, $Y_{v,t} = Y_t$.

The one-step dynamics in a $k$-reversible process in subtree $T_v$ can be
described using the function $F_{G,v}^{k} : Q^{|T_v|} \to Q^{|T_v|}$ such that
\begin{equation}
Y_{v,t} = F_{G,v}^k(Y_{v,t-1}) = (Y_{t}(w_1), Y_{t}(w_2), \ldots,
Y_{t}(w_{|T_v|})).
\end{equation}
As in Section $1$, we omit, for simplicity, the subscript referring to time,
using notation $Y'$ to refer to the configuration at time $t-1$ and $Y$ to
refer to the configuration at time $t$. Hence, $Y_v = Y_{v,t}$ and $Y'_v =
Y_{v,t-1}$.

So long as we take into account the influence of $\textit{parent}_v$ on the
dynamics of $T_v$, then it is easy to see that the following holds. If a
configuration $Y'$ exists for which $Y = F_{G}^{k}(Y')$, then we
have $Y_v = F_{G,v}^{k}(Y'_{v})$ as well. That is, the subsequence of $Y'$ that
corresponds to $T_v$ is a predecessor configuration of the subsequence of $Y$
that corresponds to $T_v$. 

Section \ref{sec:PRET} presents an algorithm that solves \textsc{Pre$(k)$} in
polynomial time for any $k$ when the graph is a tree. Section \ref{sec:PREC}
presents an algorithm that solves the associated counting problem
\textsc{\#Pre$(k)$}, also in polynomial time, for any $k$ when the graph is a
tree. 

\subsection{\boldmath Polynomial-time algorithm for \textsc{Pre$(k)$}}
\label{sec:PRET}

We start by defining the function $\mathit{vstate}(\mathit{target},
\mathit{current},p,k)$ that determines whether a vertex in state
$\mathit{current}$ can reach state $\mathit{target}$ in one time step assuming
that it has $p$ neighbors in the state different than $\mathit{current}$ in a
$k$-reversible process. This is a simple Boolean function that only checks
whether a given state transition is possible. It can be calculated in $O(1)$
time, since
\begin{equation}
\mathit{vstate}(\mathit{target},\mathit{current},p,k) =
\left\{
\begin{array}{rl}
\mathit{false},&\mbox{if } \mathit{current} =
\mathit{target} \mbox{ and } p \geq k, \\
&\mbox{or } \mathit{current} \neq \mathit{target} \mbox{ and } p < k;  \\
\mathit{true},&\mbox{otherwise.}
\end{array}
\right.
\end{equation}

The algorithm tries to build a predecessor configuration $Y'$ of $Y$ by
determining possible configurations for its subtrees and testing states on
vertices. Suppose we traverse the tree in a top-down fashion attaching states
to each vertex we visit. Let $v$ be the vertex of $T$ that the algorithm is
visiting. Suppose that for $parent_{v}$ the algorithm attached state $c$. We
define the function $\mathit{fstate}(v,c)$ that returns the state that $v$ must
have in a configuration $Y'_{v}$ such that $F(Y'_{v}) = Y_v$ or returns $\infty$
if there is no such configuration $Y'_{v}$ when $Y'(\mathit{parent}_v) = c$. If
both states are possible for $v$, the function simply returns
$Y(\mathit{parent}_v)$ or $+1$ when $v$ is the root. 

We assume that function $\mathit{fstate}$ will be called with parameters $v$ and
$0$  when $v$ is the root of the tree. Hence, $\mathit{fstate}(\mathit{root},0)$
is simply the state that $\mathit{root}$ must have in a predecessor
configuration of $Y$. If $\mathit{fstate}(\mathit{root},0)$ is different than
$\infty$, configuration $Y$ has a predecessor configuration; otherwise, it does
not.

Given the function $\mathit{fstate}(v,c)$ one can easily test whether the state
$c$ attached during the algorithm is possible or not in a predecessor
configuration of $Y$. When visiting vertex $\mathit{parent}_v$, the algorithm
calls the function $\mathit{fstate}(w,c)$ for each child $w$ of
$\mathit{parent}_v$, and then by using function $\mathit{vstate}$ it decides
whether a transition is possible from state $c$ to state $Y(\mathit{parent}_v)$.

Notice that when it is possible for both states to be returned by
$\mathit{fstate}(v,c)$, we define the function to return state
$Y(\mathit{parent}_v)$. This choice is always correct in this case, since it is
not important for the state to be actually assigned to $\mathit{parent}_v$ in
the predecessor configuration. What it does is to increase the number of
neighbors that can help $\mathit{parent}_v$ to reach state
$Y(\mathit{parent}_v)$ in the next time step. We maximize the number of
neighbors in state $Y(\mathit{parent}_v)$, and if this maximum number is not
enough to make $\mathit{parent}_v$ reach state $Y(\mathit{parent}_v)$, then a
lower number of neighbors would certainly not be either.

Now, the problem is to calculate $\mathit{fstate}(\mathit{root},0)$ in a correct
and efficient way. Assume that for each child $f$ of $v$ the values of
$\mathit{fstate}(f,+1)$ and $\mathit{fstate}(f,-1)$ are correctly calculated. In
other words, we know the states of every child of $v$ when $v$ has state $+1$ or
state $-1$ in a predecessor configuration. Let $l$ be the number of children of
$v$ with state other than the $Y'(v)$ calculated by the $\mathit{fstate}$
function. As we set the state of $\mathit{parent}_v$ we can update $l$ if
$\mathit{parent}_v$ also has a state different than $Y'(v)$, meaning that $l$
represents the number of vertices in a state different than $v$ in the
configuration we are trying to construct. Then we can verify
$\mathit{vstate}(Y(v),st,l,k)$ to check whether $\mathit{st}$ is a valid state
for $v$ in a predecessor configuration $Y'$ such that $F_{G,v}^k(Y'_v) = Y_v$.

An important observation to have in mind is that, for the case in which there is
at least one child $f$ of $v$ that does not have a possible state (which means
that $\mathit{fstate}(f,\mathit{st})$ = $\infty$), then the state $\mathit{st}$
is not valid for $v$.

Given that we already know all the valid states for $v$ when
$Y'(\mathit{parent}_v) = c$, function $\mathit{fstate}$ can be calculated like
this:
\begin{equation}
\mathit{fstate}(v,c) =
\left\{
\begin{array}{rl}
Y(\mathit{parent}_v),  &\mbox{if both states are valid for $v$;}  \\
+1,  &\mbox{if only state $+1$ is valid for $v$;}  \\
-1,  &\mbox{if only state $-1$ is valid for $v$;}  \\
\infty, & \mbox{otherwise.}
\end{array}
\right.
\end{equation}

Since we already know which the valid states are, $\mathit{fstate}(v,c)$ is
easily determined in $O(1)$ time. To check whether a given state is valid or not
we simply need to count the number of neighbors with state opposite to the one
being checked using function $\mathit{fstate}$, and then call function
$\mathit{vstate}$ to verify whether the state is valid. This can be done in
$O(d(v))$ time.

It is possible to calculate $\mathit{fstate}$ for all vertices in $T$ using a
recursive algorithm similar to depth-first search. The algorithm, when visiting
vertex $v$ with $\mathit{parent}_v$ in state $c$, recursively calculates
$\mathit{fstate}(f,+1)$ and $\mathit{fstate}(f,-1)$ for each child $f$ of $v$,
and once the algorithm returns from the recursion, all values needed to
calculate $\mathit{fstate}(v,c)$ are available. Hence, we simply need to
calculate $\mathit{fstate}(\mathit{root},0)$ recursively and check whether the
returned value is $\infty$ or not.

\begin{algorithm}[p]
    \label{alg:q1}
    \caption{$\mathit{calcfstate}(v,c)$}
    \BlankLine
    \Begin{
	\If{$\mathit{fstate}[v,c+1] \neq \mathit{NIL}$}
	{
	  \Return{$\mathit{fstate}[v,c+1]$}\;
	}
	$\mathit{count} \leftarrow 0$\;
	$\mathit{state} \leftarrow Y(\mathit{parent}_v)$\;
	\While{$\mathit{count} \neq 2$}
	{
	  $\mathit{count} \leftarrow \mathit{count} + 1$\;
	  $l \leftarrow  0$\;
	  $\mathit{ret} \leftarrow \mathit{true}$\;
	  \If{$c = -\mathit{state}$}
	  {
	    $l \leftarrow l + 1$\;
	  }  

	  \ForEach{$f \in \mathit{children}_v$}
	  {
	    \If{$\mathit{calcfstate}(f,\mathit{state}) = -\mathit{state}$}
	    {
		$l \leftarrow l + 1$\;
	    }
	    \If{$\mathit{calcfstate}(f,\mathit{state}) = \infty$}
	    {
		$\mathit{ret} \leftarrow \mathit{false}$\;
	    }
	  }
	  
	  \If{$\mathit{vstate}(Y(v), \mathit{state}, l, k) = \mathit{true} \mbox{ and } \mathit{ret} \neq \mathit{false}$}
	  {
	    $\mathit{fstate}[v,c+1] \leftarrow \mathit{state}$\;
	    \Return{$\mathit{state}$}\;
	  }
	  $\mathit{state} \leftarrow -\mathit{state}$\;
      }
      $\mathit{fstate}[v,c+1] \leftarrow \infty$\;
      \Return{$\infty$}\;
    }
\end{algorithm}

This is what Algorithm \ref{alg:q1} does. The algorithm maintains a table
$\mathit{fstate}$ that contains all the function values. This table is
initialized with the null value for all vertices and states.

For each vertex $v$, the algorithm tries to assign state $Y(\mathit{parent}_v)$,
and in case this is a valid state, this value is stored in the table and
returned. Otherwise, the algorithm tries to assign the opposite state to
$Y(\mathit{parent}_v)$ and does assign it in case it is a valid state. If none
of the states is valid then the algorithm stores $\infty$ in the table and
returns. Notice that when the algorithm accesses a position in the table it
increments the $c$ value by $1$, so it does not try to access a negative
position when $c$ has value $-1$. Table handling is very important to make the
algorithm run in polynomial time. Without verification in line 2, it would be a
simple backtracking algorithm with time complexity $O(2^n)$. We can verify
this by analyzing the case in which $T$ is the graph $P_n$ containing a single
path with $n$ vertices and $k=2$. Let $H(n)$ be the worst-case time complexity
of the algorithm in this case, without using the table. Suppose we choose the
root to be one of the vertices with degree $1$. $H(n)$ is easily verified to be
\begin{equation}
H(n) = 2H(n-1) + O(1),
\end{equation}
considering that the algorithm tries both states for each vertex, which happens
when the given input configuration does not have a predecessor configuration.
Additionally, the structure of $P_n$ allows us to calculate the number of steps
as a function of the number of steps to solve the problem on the subtree
$P_{n-1}$. By solving the above recurrence relation from $H(1) = O(1)$ we obtain
$H(n) = O(2^n)$. Using the table amounts to employing memoization \cite{cormen}
to avoid the exponential running time.

Suppose that the last function call is $\mathit{calcfstate}(v,c)$ and that this
is the first time the function is called with these parameters. In the worst
case, the algorithm calls $\mathit{calcfstate}(f,+1)$ and
$\mathit{calcfstate}(f,-1)$ for each child $f$ of $v$ and calculates
$\mathit{fstate}[v,c+1]$. For each child $f$ of $v$, the algorithm makes two
visits, and any other call to $\mathit{calcfstate}(v,c)$ will not result in any
other visit to $v$'s children since the algorithm returns
$\mathit{fstate}[v,c+1]$ in line 3. When the algorithm calls
$\mathit{calcfstate}(v,-c)$, again in the worst case it will call
$\mathit{calcfstate}(f,+1)$ and $\mathit{calcfstate}(f,-1)$ for each child $f$
of $v$, incrementing to four the number of visits to each child of $v$.
After this call to $\mathit{calcfstate}(v,-c)$, any other visit to vertex $v$
will not produce any other visit to any child of $v$ since both
$\mathit{fstate}(v,c)$ and $\mathit{fstate}(v,-c)$ will be stored in the table
and the algorithm will return in line 3. We conclude that each vertex will be
visited at most four times. We also conclude that each edge will be traversed at
most four times, twice for each state being tested. Thus, the time complexity of
Algorithm \ref{alg:q1} is $O(n+m)$. Using $m = n-1$ yields a time complexity of
$O(n)$.

Algorithm \ref{alg:q2} implements the recursive idea to recover a predecessor
configuration $Y'$ once we know that one exists. It traverses the tree accessing
values in table $\mathit{fstate}$ according to the vertex being visited and to
the state assigned to its parent. Once a state is assigned to a vertex, the
algorithm looks at the table to check the only option of recursive call to make.
Algorithm 2 traverses the tree exactly once, assigning states; thus, its time
complexity is $O(n)$.

\begin{algorithm}[t]
    \label{alg:q2}
    \caption{$\mathit{buildstate}(v,c)$}
    \BlankLine
    \Begin{
        $y[v] = \mathit{fstate}[v,c+1]$\;

	\ForEach{$f \in \mathit{children}_v$}
	 {
	    $\mathit{buildstate}(f,y[v])$\;
	 }
    }
\end{algorithm}

\subsection{\boldmath Polynomial-time algorithm for \textsc{\#Pre$(k)$}}
\label{sec:PREC}

Suppose a class of trees constructed in the following manner:
\begin{itemize}
\item A vertex $v_1$ connected to $p > 1$ vetices $v_2, v_3, \ldots, v_{p+1}$;
\item Each vertex $v_i$, with $2 \leq i \leq p+1$, connected to two other
vertices, denoted by $d_{i-1}$ and $e_{i-1}$. 
\end{itemize}

Each tree in this class has $3p + 1$ vertices. Assume a $2$-reversible process
and a configuration $Y$ in which all states are $+1$. In this case, any
configuration in which vertex $v_1$ has state $-1$ and at least two vertices out
of $v_2, v_3, \ldots, v_{p+1}$ have state $+1$ is a predecessor configuration of
$Y$. Notice that all vertices $d_i$ and $e_i$ must have state $+1$ in a
predecessor configuration of $Y$, since they have degree $1$. A lower bound on
the number of predecessor configurations of $Y$ for this class of trees is given
by
\begin{equation}
\sum_{i=2}^{p} {\binom{p}{i}}  = 2^{p} - p - 1.
\end{equation}

Figure \ref{fig:energiaEstrela} illustrates the construction of a tree in this
class for $p = 3$ and the respective configuration $Y$. We also illustrate some
of the possible predecessor configurations for a $2$-reversible process. 

Given this lower bound on the number of predecessor configurations, it turns out
that modifying the previous algorithm in order to store all possible predecessor
configurations and then reconstruct them is an exponential-time task. However,
solving the associated counting problem can be done in time $O(n^2)$ for every
$k$.

\subfiglabelskip=0pt
\begin{figure}[t]
\centering
\subfigure[]{
\begin{tikzpicture}
[scale=.8,auto=left,inner sep=0pt,minimum size=0.65cm]
\node[draw,fill=lightgray,circle] (n1)  at (0,0) {$v_{1}$};
\node[draw,fill=lightgray,circle] (n2)  at (2,0) {$v_{2}$};
\node[draw,fill=lightgray,circle] (n3)  at (0,3) {$v_{3}$};
\node[draw,fill=lightgray,circle] (n4)  at (-2,0) {$v_{4}$};
\node[draw,fill=lightgray,circle] (d1)  at (3,-1) {$d_{1}$};
\node[draw,fill=lightgray,circle] (e1)  at (3,1) {$e_{1}$};
\node[draw,fill=lightgray,circle] (d2)  at (-1,4) {$d_{2}$};
\node[draw,fill=lightgray,circle] (e2)  at (1,4) {$e_{2}$};
\node[draw,fill=lightgray,circle] (d3)  at (-3,-1) {$d_{3}$};
\node[draw,fill=lightgray,circle] (e3)  at (-3,+1) {$e_{3}$}; 
\foreach \from/\to in {n1/n2, n1/n3, n1/n4, n2/d1, n2/e1, n3/d2, n3/e2, n4/d3,
n4/e3}
\draw (\from) -- (\to);
\end{tikzpicture}
}
\subfigure[]{
\begin{tikzpicture}
[scale=.8,auto=left,inner sep=0pt,minimum size=0.65cm]
\node[draw,fill=white,circle] (n1)  at (0,0) {$v_{1}$};
\node[draw,fill=lightgray,circle] (n2)  at (2,0) {$v_{2}$};
\node[draw,fill=lightgray,circle] (n3)  at (0,3) {$v_{3}$};
\node[draw,fill=white,circle] (n4)  at (-2,0) {$v_{4}$};
\node[draw,fill=lightgray,circle] (d1)  at (3,-1) {$d_{1}$};
\node[draw,fill=lightgray,circle] (e1)  at (3,1) {$e_{1}$};
\node[draw,fill=lightgray,circle] (d2)  at (-1,4) {$d_{2}$};
\node[draw,fill=lightgray,circle] (e2)  at (1,4) {$e_{2}$};
\node[draw,fill=lightgray,circle] (d3)  at (-3,-1) {$d_{3}$};
\node[draw,fill=lightgray,circle] (e3)  at (-3,+1) {$e_{3}$}; 
\foreach \from/\to in {n1/n2, n1/n3, n1/n4, n2/d1, n2/e1, n3/d2, n3/e2, n4/d3,
n4/e3}
\draw (\from) -- (\to);
\end{tikzpicture}
}
\subfigure[]{
\begin{tikzpicture}
[scale=.8,auto=left,inner sep=0pt,minimum size=0.65cm]
\node[draw,fill=white,circle] (n1)  at (0,0) {$v_{1}$};
\node[draw,fill=lightgray,circle] (n2)  at (2,0) {$v_{2}$};
\node[draw,fill=white,circle] (n3)  at (0,3) {$v_{3}$};
\node[draw,fill=lightgray,circle] (n4)  at (-2,0) {$v_{4}$};
\node[draw,fill=lightgray,circle] (d1)  at (3,-1) {$d_{1}$};
\node[draw,fill=lightgray,circle] (e1)  at (3,1) {$e_{1}$};
\node[draw,fill=lightgray,circle] (d2)  at (-1,4) {$d_{2}$};
\node[draw,fill=lightgray,circle] (e2)  at (1,4) {$e_{2}$};
\node[draw,fill=lightgray,circle] (d3)  at (-3,-1) {$d_{3}$};
\node[draw,fill=lightgray,circle] (e3)  at (-3,+1) {$e_{3}$}; 
\foreach \from/\to in {n1/n2, n1/n3, n1/n4, n2/d1, n2/e1, n3/d2, n3/e2, n4/d3,
n4/e3}
\draw (\from) -- (\to);
\end{tikzpicture}
}
\caption{Configuration $Y$ with a possibly exponential number of predecessor
configurations. Shaded circles indicate state $+1$; empty circles indicate state
$-1$. (a) Configuration $Y$; (b) Predecessor configuration $Y'$; (c) Predecessor
configuration $Y''$.} \label{fig:energiaEstrela}
\end{figure}
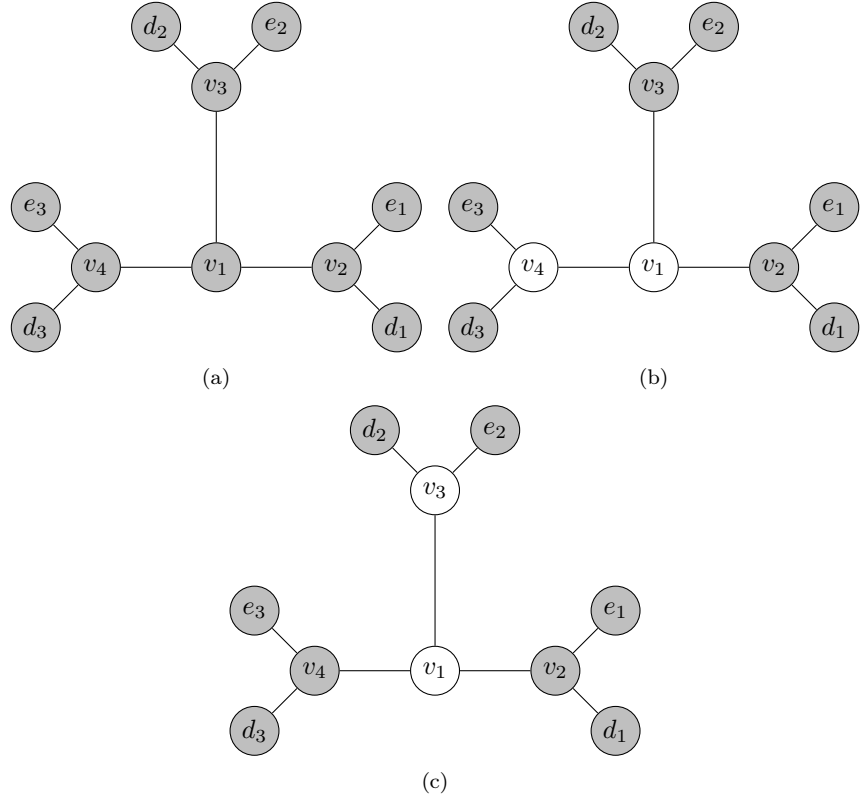

In order to do this, we use a function similar to $\mathit{fstate}$. However, a
more robust function must be used, one that will contain not only the state that
a vertex must have in a predecessor configuration, but also the number of
predecessor configurations of the subtree for the cases in which the vertex has
states $+1$ and $-1$. We define the function $\mathit{cfstate}(v,c)$ as an
ordered pair whose first element is the number of predecessor configurations of
subtree $T_v$ when $\mathit{parent}_v$ has state $c$ and $v$ has state $+1$, the
second element being the number of predecessor configurations of subtree $T_v$
when $\mathit{parent}_v$ has state $c$ and $v$ has state $-1$.

Similarly to function $\mathit{fstate}$, when $v$ is the root of the tree,
function $\mathit{cfstate}$ is called with parameters $v$ and $0$. Thus, the
total number of predecessor configurations of subtree $T_v$, when
$\mathit{parent}_v$ has state $c$, is the sum of the two elements in the ordered
pair $\mathit{cfstate}(v,c)$, and in case $v$ is the root, the sum of the two
elements in the ordered pair $\mathit{cfstate}(v, 0)$.

The natural way to calculate $\mathit{cfstate}(v,c)$ is quite simple. Without
loss of generality, suppose that we are calculating the first element of pair
$\mathit{cfstate}(v,c)$ and that, in this case, at least $l$ children of $v$
must have an arbitrary state $\mathit{st}$ in a predecessor configuration. For
simplicity, the first element in the pair $\mathit{cfstate}(v,c)$ will be
denoted by $\mathit{cfstate}(v,c)_{+1}$, whereas the second element will be
denoted by $\mathit{cfstate}(v,c)_{-1}$. Then
\begin{equation}
\mathit{cfstate}(v,c)_{+1}
= \sum\limits_{X \in C_{v}^{l}}^{}{\mathit{calc}(X,\mathit{st})},
\end{equation}
where $C_{v}^{l}$ is the set of all subsets of children of $v$ with at least $l$
elements and
\begin{equation}
\mathit{calc}(X,\mathit{st}) =
\left\{
\begin{array}{rl}
1,  &\mbox{if } X = \emptyset; \\
\prod\limits_{f \in X}^{}{\mathit{cfstate}(f,+1)_{\mathit{st}}}
\prod\limits_{f \in V(G) \setminus X}^{}{\mathit{cfstate}(f,+1)}_{-\mathit{st}},
&\mbox{otherwise. } \\
\end{array}
\right.
\end{equation}

In other words, we simply test all possibilities of state assignment to all
children of $v$ such that at least $l$ of them have state $+1$. For each one of
these possibilities we calculate the total number of predecessor configurations,
multiplying the number of predecessor configurations for each subtree. The value
$\mathit{cfstate}(v,c)_{+1}$ is the sum obtained in each possibility.

Notice that it is possible to calculate $\mathit{cfstate}(v,c)_{-1}$ likewise,
again assuming that in each predecessor configuration at least $l$ children of
$v$ have state $\mathit{st}$. We just need to redefine
$\mathit{calc}(X,\mathit{st})$ to be
\begin{equation}
\mathit{calc}(X,\mathit{st}) =
\left\{
\begin{array}{rl}
1,  &\mbox{if } X = \emptyset; \\
\prod\limits_{f \in X}^{}{\mathit{cfstate}(f,-1)_{\mathit{st}}}
\prod\limits_{f \in V(G) \setminus X}^{}{\mathit{cfstate}(f,-1)}_{-\mathit{st}},
&\mbox{otherwise. } \\
\end{array}
\right.
\end{equation}

A point to note in this approach is that the total number of configurations to
iterate through is $O(2^{d(v)-1})$. For example, for $l=1$,
\begin{equation}
|C_{v}^{1}| = \displaystyle\sum_{i=1}^{d(v) - 1}
{\binom{d(v) - 1}{i}} = 2^{d(v) - 1} - 1.
\end{equation}
We can, however, use dynamic programming \cite{cormen} to calculate
$\mathit{cfstate}(v,c)$ in polynomial time without needing to iterate through
all possible configurations.

Assume that the children of $v$ are ordered as in $\mathit{child}_{v,0}, \dots,
\mathit{child}_{v,d(v)-2}$, where $v$ is an internal vertex of the tree. If $v$
is the root, the order is: $\mathit{child}_{v,0}, \dots,\break
\mathit{child}_{v,d(v)-1}$. For simplicity, denote the number of children of
$v$ by $d'(v)$.

Define the function $g_{v}^{\mathit{rt}}(i,j)$ as the total number of
predecessor configurations for subtree $T_v$ in which $v$ has state
$\mathit{rt}$ and, moreover, exactly $j$ of the vertices $\mathit{child}_{v,0},
\mathit{child}_{v,1}, \ldots, \mathit{child}_{v,i-1}$ have
state $+1$. Similarly, define $h_{v}^{\mathit{rt}}(i,j)$ as the total number of
predecessor configurations for subtree $T_v$ in which $v$ has state
$\mathit{rt}$ and, moreover, exactly $j$ of the vertices $\mathit{child}_{v,0},
\mathit{child}_{v,1}, \ldots, \mathit{child}_{v,i-1}$ have
state $-1$. Then we can calculate $\mathit{cfstate}(v,c)_{\mathit{rt}}$ in the
following way.

If at least $l$ children of $v$ are required to have state $+1$ in a predecessor
configuration of $T_v$, then
\begin{equation}
\mathit{cfstate}(v,c)_{\mathit{rt}}
= \sum_{i=l}^{d'(v)} g_{v}^{\mathit{rt}}(d'(v),i),
\end{equation}
where $g_{v}^{\mathit{rt}}(i,j)$ is defined recursively, as in
\begin{equation}
g_{v}^{\mathit{rt}}(i,j) =
\left\{
\begin{array}{rl}
0,  &\mbox{if } i = 0 \mbox{ and } j > 0;\\
1,  &\mbox{if } i = 0 \mbox{ and } j = 0;\\
g_{v}^{\mathit{rt}}(i-1,j)a_i + g_{v}^{\mathit{rt}}(i-1,j-1)b_i,
&\mbox{if } i > 0 \mbox{ and } j > 0;\\
g_{v}^{\mathit{rt}}(i-1,j)a_i,  &\mbox{if } i > 0 \mbox{ and } j = 0,\\
\end{array}
\right.
\end{equation}
with $a_i = \mathit{cfstate}(\mathit{child}_{v,i}, \mathit{rt})_{-1}$ and $b_i
= \mathit{cfstate}(\mathit{child}_{v,i}, \mathit{rt})_{+1}$.

If, instead, at least $l$ children of $v$ are required to have state $-1$ in a
predecessor configuration of $T_v$, then
\begin{equation}
\mathit{cfstate}(v,c)_{\mathit{rt}}
= \sum_{i=l}^{d'(v)} h_{v}^{\mathit{rt}}(d'(v),i),
\end{equation}
where $h_{v}^{\mathit{rt}}(i,j)$ is such that
\begin{equation}
h_{v}^{\mathit{rt}}(i,j) =
\left\{
\begin{array}{rl}
0,  &\mbox{if } i = 0 \mbox{ and } j > 0;\\
1,  &\mbox{if } i = 0 \mbox{ and } j = 0;\\
h_{v}^{\mathit{rt}}(i-1,j)b_i + h_{v}^{\mathit{rt}}(i-1,j-1)a_i,
&\mbox{if } i > 0 \mbox{ and } j > 0;\\
h_{v}^{\mathit{rt}}(i-1,j)b_i,  &\mbox{if } i > 0 \mbox{ and } j = 0.\\
\end{array}
\right.
\end{equation}

As given above, the calculation of $g_{v}^{\mathit{rt}}(i,j)$ involves four
cases that depend on both state possibilities for vertex
$\mathit{child}_{v,i-1}$. They are the following (the cases for
$h_{v}^{\mathit{rt}}(i,j)$ are analogous):
\begin{itemize}
\item $i = 0$ and $j > 0$: In this case there is no vertex to be considered and
there should exist $j > 0$ vertices in state $+1$. Hence, no predecessor
configuration exists.
\item $i = 0$ and $j = 0$: This is the only case in which a predecessor
configuration exists with $i=0$, since there is no need to have a vertex in
state $+1$. Hence, there exists exactly one predecessor configuration.
\item $i > 0$ and $j > 0$: In this case we add the number of predecessor
configurations for the first $i$ subtrees, considering that vertex
$\mathit{child}_{v,i-1}$ has state $+1$, to the total number of predecessor
configurations when this vertex has state $-1$. Notice that if we assume that
$\mathit{child}_{v,i-1}$ has state $+1$, then exactly $j - 1$ of vertices
$\mathit{child}_{v,0}, \ldots, \mathit{child}_{v,i-2}$ must have state $+1$ as
well. Otherwise, if we assume $\mathit{child}_{v,i-1}$ to have state $-1$, then
$j$ of the first $i-1$ children of $v$ must have state $+1$.
\item $i > 0$ and $j = 0$: In this case none of vertices
$\mathit{child}_{v,0}, \ldots, \mathit{child}_{v,i-1}$ is required to have state
$+1$. We calculate the total number of predecessor configurations of the
subtrees rooted at vertices
$\mathit{child}_{v,0}, \ldots, \mathit{child}_{v,i-2}$ with all of them having
state $-1$, which is given by $g_{v}^{\mathit{rt}}(i-1,j)$, and multiply it
by the total number of predecessor configurations of subtree
$T_{\mathit{child}_{v,i-1}}$ with vertex $\mathit{child}_{v,i-1}$ having state
$-1$, which is given by $\mathit{cfstate}(\mathit{child}_{v,i-1},
\mathit{rt})_{-1}$.
\end{itemize}

\begin{algorithm}[p]
\small
    \label{alg:q3}
    \caption{$\mathit{countfstate}(v,c)$}
    \BlankLine
    \Begin{
	\If{$\mathit{cfstate[v,c+1]} \neq \mathit{NIL}$}
	{
	  \Return{$\mathit{cfstate}[v,c+1]$}\;
	}
	$d'(v) \leftarrow d(v) -2$\;
	\If{$v = \mathit{root}$}
	{
	  $d'(v) \leftarrow d(v) - 1$\;
	}
	$\mathit{count} \leftarrow 0$; $\mathit{state} \leftarrow +1$\;
	\While{$\mathit{count} \neq 2$}
	{
	  $i \leftarrow 0$; $\mathit{count} \leftarrow \mathit{count} + 1$\;
	  \ForEach{$f \in \mathit{parent}_v$}
	  {
	    $\mathit{par}[i] \leftarrow \mathit{countfstate}(f,\mathit{state})$; $i \leftarrow i + 1$\;
	  }
	  $l \leftarrow \mathit{threshold}_v(\mathit{state},c,k)$\;
	  \If{$v = \mathit{root}$}
	  {
	    $l \leftarrow \mathit{threshold}_v(\mathit{state},\infty,k)$\;
	  }

	  $\mathit{tab}[0,0] \leftarrow 1$\;
	
	  \For{$i \leftarrow 1$ to $d'(v)$}
	  {
	       $\mathit{tab}[0,j] \leftarrow 0$\;
	  }

	  \For{$i \leftarrow 1$ to $d'(v)$}
	  {
	      \For{$j \leftarrow 0$ to $d'(v)$}
	      {
		 \If{$Y(v) = +1$}
		 {
		    $\mathit{tab}[i,j] \leftarrow \mathit{tab}[i-1,j] \mathit{par}[i]_{-}$\;
		    \If{$j > 0$}
		    {
		      $\mathit{tab}[i,j] \leftarrow \mathit{tab}[i,j] + \mathit{tab}[i-1,j-1] \mathit{par}[i]_{+}$\;
		    }
		 }
		 \Else
		 {
		    $\mathit{tab}[i,j] \leftarrow \mathit{tab}[i-1,j] \mathit{par}[i]_{+}$\;
		    \If{$j > 0$}
		    {
		      $\mathit{tab}[i,j] \leftarrow \mathit{tab}[i,j] + \mathit{tab}[i-1,j-1] \mathit{par}[i]_{-}$\;
		    }
		 }
	      }
	  }
	  $\mathit{cfstate}[v,c+1]_\mathit{state} \leftarrow \sum_{i=l}^{d'(v)} \mathit{tab}[d'(v),i]$\;
	  $\mathit{state} \leftarrow -\mathit{state}$\;
      }
      \Return{$\mathit{cfstate}[v,c+1]$}\;
    }
\end{algorithm}

Applying the recursion directly results in an algorithm whose number of
operations grows very fast. Thus, once again we resort to dynamic programming to
calculate $g_{v}^{\mathit{rt}}$ and $h_{v}^{\mathit{rt}}$.

Algorithm \ref{alg:q3} implements the recursive scheme given above. Similarly
to Algorithm \ref{alg:q1}, we keep $\mathit{cfstate}$ values in a table to avoid
exponential times. The algorithm does not use the functions
$g_{v}^{\mathit{rt}}$ and $h_{v}^{\mathit{rt}}$ explicitly, but instead a table
$tab$ for checking the value of $Y(v)$ to decide which multiplication to perform
in lines 23 and 27. We use $g_{v}^{\mathit{rt}}$ if $Y(v)=+1$ and
$h_{v}^{\mathit{rt}}$ if $Y(v)=-1$. Suppose that $Y(v)=+1$.
If in a predecessor configuration of $Y$ vertex $v$ has state $-1$,
then, depending on the state of $\mathit{parent}_v$, vertex $v$ will need $k$ or
$k-1$ children with state $+1$ in that configuration. If $v$ has
state $+1$ in the predecessor configuration, then, depending on the state of
$\mathit{parent}_v$, vertex $v$ will need $d(v) - k + 1$ or $d(v) - k$ children
with state $+1$. The analysis for the case of $Y(v) = -1$ is analogous.

Besides choosing which function to use, we also need to calculate the value of
$l$. Given a vertex $v$ and that $\mathit{parent}_v$ has state $c$ in
the predecessor configuration, we define the function
$\mathit{threshold}_v(\mathit{current},c,k)$ as the least number of children of
$v$ in state $Y(v)$ in the predecessor configuration such that in the next step
$v$ has state $Y(v)$, assuming that $v$ has state \textit{current} in the
predecessor configuration. Thus,
\begin{equation}
\mathit{threshold}_v(\mathit{current},c,k) =
\left\{
\begin{array}{rl}
\min\{d(v) - k + 1,0\},  &\mbox{if } Y(v) = \mathit{current}\\
&\mbox{and } c \neq Y(v);\\
\min\{d(v) - k,0\},  &\mbox{if } Y(v) = \mathit{current}\\
&\mbox{and } c = Y(v);\\
k,  &\mbox{if } Y(v) \neq \mathit{current}\\
&\mbox{and } c \neq Y(v);\\
k - 1, &\mbox{if } Y(v) \neq \mathit{current}\\
&\mbox{and } c = Y(v).\\
\end{array}
\right.
\end{equation}

The time spent for each vertex $v$ in the double loop of line $18$ is
$O(d'(v)^2)$. Summing up over all vertices we get an $O(n^2)$ time complexity.

\section{\boldmath Polynomial-time algorithm for \textsc{Pre$(2)$}\hfill\break
on graphs with maximum degree no greater than 3}
\label{bounded}

In this section, we show that \textsc{Pre$(2)$} is in P when $\Delta(G) \leq 3$,
where $\Delta(G)$ is the maximum degree in $G$. Hence, this result covers the
case of cubic graphs.

We show how to reduce the problem to \textsc{2Sat}, solvable in $O(N+M)$ time;
as before, $N$ is the number of variables and $M$ is the number of
clauses \cite{2satP}. That is, given a configuration $Y$ we want to create a
\textsc{2Sat} instance $S$ such that $S$ is satisfiable if and only if $Y$ has a
predecessor configuration.

We start creating $S$ by adding literals $x_v$ and $\neg{x_v}$ for each vertex
$v$ in the graph. We will construct the clauses of $S$ in such a way that
whenever $Y$ has a predecessor configuration $Y'$, $S$ is satisfied by
letting each $x_v$ with $Y'(v) = +1$ be true and each $x_v$ with $Y'(v) = -1$
be false. Similarly, from any satisfying truth assignment for $S$ we construct a
predecessor configuration of $Y$ by assigning state $+1$ to $v$ whenever
$x_v$ is true and assigning state $-1$ whenever $x_v$ is false. Because
$\Delta(G) \leq 3$, we can construct a set of clauses in the following way.

For each vertex $v$ such that $Y(v) = +1$:
\begin{itemize}
\item If $d(v) = 1$: In the predecessor configuration $Y'$, $v$ must have the
same state as in configuration $Y$, since the process is $2$-reversible. Thus,
we add the clause:
\begin{itemize}
\item $x_v$.
\end{itemize}
Clearly, this clause is satisfied whenever $Y$ has a predecessor configuration.
\item If $d(v) = 2$ with neighbors $u$ and $w$: If in the predecessor
configuration $Y'$ vertex $v$ has state $+1$, then in $Y'$ at least one of its
neighbors must also have state $+1$. If in the predecessor configuration $Y'$
vertex $v$ has state $-1$, then both its neighbors must have state $+1$. We can
encode these conditions by adding the following clauses:
\begin{itemize}
\item $\neg{x_v} \rightarrow x_u \equiv x_v \vee x_u;$
\item $\neg{x_v} \rightarrow x_w \equiv x_v \vee x_w;$
\item $x_v \rightarrow x_u \vee x_w \equiv \neg{x_v} \vee x_u \vee x_w.$
\end{itemize}
Analyzing these three clauses reveals that when $x_v$ is true then $x_u$ or
$x_w$ is also true in order to make all three clauses satisfiable. In case
$x_v$ is false, we force $x_u$ and $x_w$ to have value true. We simplify the
clauses as follows:
\begin{itemize}
\item $x_v \vee x_u$;
\item $x_v \vee x_w$;
\item $x_u \vee x_w$.
\end{itemize} 
\item If $d(v)$ = 3 with neighbors $u$, $w$, and $z$: If in the predecessor
configuration $Y'$ vertex $v$ has state $+1$, then at least two of its neighbors
must have state $+1$ in $Y'$. If in $Y'$ vertex $v$ has state $-1$, then again
at least two of its neighbors must have state $+1$ in $Y'$. Therefore, we add
the following clauses:
\begin{itemize}
\item $\neg{x_v} \rightarrow x_u \vee x_w \equiv x_v \vee x_u \vee x_w$;
\item $\neg{x_v} \rightarrow x_u \vee x_z \equiv x_v \vee x_u \vee x_z$;
\item $\neg{x_v} \rightarrow x_w \vee x_z \equiv x_v \vee x_w \vee x_z$;
\item $x_v \rightarrow x_u \vee x_w \equiv \neg{x_v} \vee x_u \vee x_w$;
\item $x_v \rightarrow x_u \vee x_z \equiv \neg{x_v} \vee x_u \vee x_z$;
\item $x_v \rightarrow x_w \vee x_z \equiv \neg{x_v} \vee x_w \vee x_z$.
\end{itemize} 
As the value assigned to $x_v$ is not important in this subset of clauses, we
can easily simplify them:
\begin{itemize}
\item $x_u \vee x_w$;
\item $x_u \vee x_z$;
\item $x_w \vee x_z$.
\end{itemize}
\end{itemize}

For each vertex $v$ such that  $Y(v) = -1$, the cases are analogous:
\begin{itemize}
\item If $d(v) = 1$, create the clause:
\begin{itemize}
\item $\neg{x_v}$.
\end{itemize}
\item If $d(v) = 2$ with neighbors $u$ and $w$, create the clauses:
\begin{itemize}
\item $\neg{x_v} \vee \neg{x_u}$;
\item $\neg{x_v} \vee \neg{x_w}$;
\item $\neg{x_u} \vee \neg{x_w}$.
\end{itemize}
\item If $d(v) = 3$ with neighbors $u$, $w$, and $z$, create the clauses:
\begin{itemize}
\item $\neg{x_u} \vee \neg{x_w}$;
\item $\neg{x_u} \vee \neg{x_z}$;
\item $\neg{x_w} \vee \neg{x_z}$.
\end{itemize} 
\end{itemize}

We have constructed the set of clauses for $S$ in such a way that it is directly
satisfiable if $Y$ has at least one predecessor configuration. It now remains
for us to argue that the configuration $Y'$ obtained from a satisfying truth
assignment as explained above is indeed a predecessor of $Y$.

Suppose, to the contrary, that such a $Y'$ is not a predecessor of $Y$. In other
words, at least one vertex $v$ exists that does not reach state $Y(v)$ within
one time step. Analyzing the case of $Y(v) = +1$ we have the following
possibilities:
\begin{itemize}
\item $Y'(v) = +1$: In order to force $v$ to change its state $v$ must have at
least two neighbors in state $-1$. Hence, $v$ is necessarily a vertex with
degree at least $2$.

If $d(v) = 2$, then both neighbors of $v$ have state $-1$ and we have
clause $x_u \vee x_w$ not satisfied, which is a contradiction.

If $d(v) = 3$, then in a similar way there is an unsatisfied clause.

\item $Y'(v) = -1$: In order to force $v$ to remain in state $-1$ in the next
time step, there must be at most one neighbor in state $+1$. Notice that we must
have $d(v) \neq 1$, since $Y(v) = +1$, and then the satisfied clause
$x_v$ forces $Y'(v) = +1$.

If $d(v) = 2$ with neighbors $u$ and $w$, then at least one of these two
vertices must have state $-1$; hence $v$ does not change its state to $+1$, but
this implies that one of the clauses $x_v \vee x_w$ or $x_v \vee x_u$ it not
satisfied. 

If $d(v) = 3$, then since we have clauses with two literals involving all
the three neighbors of $v$, at least one of them is not satisfied.
\end{itemize}

The case of $Y(v) = -1$ is analogous. We conclude that if $S$ is satisfiable
then configuration $Y'$ is a predecessor configuration of $Y$.

To summarize, given a graph with $n$ vertices, we create $n$ variables and in
the worst case $3n$ clauses. Each clause has at most two literals. Thus, $S$ is
indeed a \textsc{2Sat} instance and we can solve \textsc{Pre(2)} in $O(n)$ time
when $\Delta(G) \leq 3$.

\section{Conclusions}\label{concl}

In this paper we have dealt with \textsc{Pre($k$)} and \textsc{\#Pre($k$)}, our
denominations for the \textsc{Predecessor Existence} problem and its counting
variation for $k$-reversible processes. We have shown that \textsc{Pre($1$)} is
solvable in polynomial time, that \textsc{Pre($k$)} is NP-complete for $k > 1$
even for bipartite graphs, and that it can be solved in
polynomial time for trees. For trees we have also shown that \textsc{\#Pre($k$)}
is polynomial-time solvable. We have also demonstrated the polynomial-time
solvability of \textsc{Pre($2$)} if $\Delta(G) \leq 3$.

We identify two problems worth investigating:
\begin{itemize}
\item Identify other cases in which \textsc{Pre($k$)} can be solved in
polynomial time.
\item Study the complexity properties of \textsc{\#Pre($2$)} for
$\Delta(G) \leq 3$.
\end{itemize}

\subsection*{Acknowledgments}

The authors acknowledge partial support from CNPq, CAPES, and FAPERJ BBP grants.

\bibliography{predex}

\begin{thebibliography}{10}

\bibitem{bootstrap-example}
J.~Adler.
\newblock Bootstrap percolation.
\newblock {\em Physica A}, 171:453--470, 1991.

\bibitem{2satP}
B.~Aspvall, M.~F. Plass, and R.~E. Tarjan.
\newblock A linear-time algorithm for testing the truth of certain quantified
  {B}oolean formulas.
\newblock {\em Inform. Process. Lett.}, 8(3):121--123, 1979.

\bibitem{pre}
C.~L. Barrett, H.~B. {Hunt III}, M.~V. Marathe, S.~S. Ravi, D.~J. Rosenkrantz,
  R.~E. Stearns, and P.~T. Tosic.
\newblock Gardens of {E}den and fixed points in sequential dynamical systems.
\newblock {\em Discrete Math. Theor. Comput. Sci. Proc.}, AA:95--110, 2001.

\bibitem{cormen}
T.~H. Cormen, C.~E. Leiserson, R.~L. Rivest, and C.~Stein.
\newblock {\em Introduction to Algorithms}.
\newblock The MIT Press, Cambridge, MA, third edition, 2009.

\bibitem{opinion}
M.~H. De{G}root.
\newblock Reaching a consensus.
\newblock {\em J. Am. Stat. Assoc.}, 69:167--182, 1974.

\bibitem{conversion_set}
M.~C. Dourado, L.~D. Penso, D.~Rautenbach, and J.~L. Szwarcfiter.
\newblock Reversible iterative graph processes.
\newblock {\em Theor. Comput. Sci.}, 460:16--25, 2012.

\bibitem{processosReversiveis}
P.~A. {Dreyer~Junior}.
\newblock {\em Application and Variations of Domination in Graphs}.
\newblock {Ph.D.} dissertation, The State University of New Jersey, New
  Brunswick, NJ, 2000.

\bibitem{jhonson}
M.~R. Garey and D.~S. Johnson.
\newblock {\em Computers and Intractability: A Guide to the Theory of
  NP-Completeness}.
\newblock W. H. Freeman and Company, New York, NY, 1979.

\bibitem{phase-wrapping}
D.~C. Ghiglia, G.~A. Mastin, and L.~A. Romero.
\newblock Cellular automata method for phase unwrapping.
\newblock {\em J. Opt. Soc. Am. A}, 4:267--280, 1987.

\bibitem{golesOlivos2}
E.~Goles and J.~Olivos.
\newblock The convergence of symmetric threshold automata.
\newblock {\em Inform. Control}, 51:98--104, 1981.

\bibitem{golesOlivos}
E.~Goles and J.~Olivos.
\newblock Periodic behavior of binary threshold functions and applications.
\newblock {\em Discrete Appl. Math.}, 3:93--105, 1985.

\bibitem{hopfield}
J.~J. Hopfield.
\newblock Neural networks and physical systems with emergent collective
  computational abilities.
\newblock {\em Proc. Natl. Acad. Sci. USA}, 79:2554--2558, 1987.

\bibitem{luccio}
F.~Luccio, L.~Pagli, and H.~Sanossian.
\newblock Irreversible dynamos in butterflies.
\newblock In {\em Proceedings of the Sixth International Colloquium on
  Structural Information and Communication Complexity}, pages 204--218, 1999.

\bibitem{disease-example}
A.~R. Mikler, S.~Venkatachalam, and K.~Abbas.
\newblock Modeling infectious diseases using global stochastic cellular
  automata.
\newblock {\em J. Biol. Syst.}, 13:421--439, 2005.

\bibitem{dissertacao}
L.~I.~L. Oliveira.
\newblock {\em $k$-Reversible Processes in Graphs}.
\newblock {M.Sc.} thesis, Federal University of Rio de Janeiro, Brazil, 2012.
\newblock In Portuguese.

\bibitem{peleg}
D.~Peleg.
\newblock Size bounds for dynamic monopolies.
\newblock {\em Discrete Appl. Math.}, 86:263--273, 1998.

\bibitem{sutner}
K.~Sutner.
\newblock On the computational complexity of finite cellular automata.
\newblock {\em J. Comput. Syst. Sci.}, 50:87--97, 1995.

\bibitem{preCount}
P.~To\v{s}i\'{c}.
\newblock {\em Modeling and Analysis of the Collective Dynamics of Large-Scale
  Multi-Agent Systems}.
\newblock {Ph.D.} dissertation, University of Illinois, Urbana-Champaign, IL,
  2006.

\end{thebibliography}
\bibliographystyle{plain}

\end{document}